\newtheorem{theorem}{Theorem}[section]
\newtheorem{proposition}[theorem]{Proposition}
\theoremstyle{definition}
\numberwithin{equation}{section}
\newcommand{\Rz}{{\mathbb R}}
\newcommand{\Nz}{\mathbb{N}}
\newcommand{\Zz}{\mathbb{Z}}
\newcommand{\haz}{\widehat}
\newcommand{\ove}{\overline}
\begin{document}

\title[Graphene ground states]{Graphene ground states}

\author{Manuel Friedrich}
\address[Manuel Friedrich]{Faculty of Mathematics, University of Vienna, 
Oskar-Morgenstern-Platz 1, 1090 Wien, Austria.}
\email{manuel.friedrich@univie.ac.at}
\urladdr{http://www.mat.univie.ac.at/$\sim$friedrich}

\author{Ulisse Stefanelli}
\address[Ulisse Stefanelli]{Faculty of Mathematics, University of Vienna, 
Oskar-Morgenstern-Platz 1, 1090 Wien, Austria  and  Istituto di Matematica
Applicata e Tecnologie Informatiche \textit{{E. Magenes}}, v. Ferrata 1, 27100
Pavia, Italy.}
\email{ulisse.stefanelli@univie.ac.at}
\urladdr{http://www.mat.univie.ac.at/$\sim$stefanelli}

\subjclass[2010]{70F45
, 82D80
}
\keywords{Graphene, ground states, nonflatness, three-dimensional
  structures, periodicity.}

\begin{abstract} Graphene is locally two-dimensional but not flat. Nanoscale ripples appear in
suspended samples and rolling-up often occurs when boundaries
are not fixed.  We address this variety of graphene geometries by classifying all ground-state deformations of the
  hexagonal lattice  with respect to configurational energies including
  two- and three-body  terms.  As a consequence, we   prove that all ground-state deformations are either  periodic
in one direction, as in the case of ripples, or rolled up, as in the case of nanotubes.
\end{abstract}

\maketitle

\section{Introduction}

Graphene is a one-atom thick layer of carbon atoms arranged in  a  regular
hexagonal lattice. Its serendipitous discovery in 2005 sparkled
research on two-dimensional materials systems. This new branch of
Materials Science exponentially developed in the last years. An
impressive variety of new low-dimensional systems  has  been presented
and their potential for innovative applications, especially in
optoelectronics, is currently strongly investigated \cite{Ferrari}.

The lower-dimensionality of graphene is at the basis of its amazing mechanical, optical, and electronic properties. On the other hand, the classical Mermin-Wagner Theorem \cite{Landau2,Mermin,Mermin2} excludes the possibility of realizing truly two-dimensional systems at finite temperature. 
 Indeed, observations on suspended samples seem to indicate that
 graphene is generally not exactly flat but gently rippled
 \cite{Meyer}. Wavy patterns on the scale of approximately one hundred
 atom spacings have been  computationally investigated \cite{Fasolino}
 and   are considered to be  responsible for the stabilization of graphene at
 finite temperature.  Nonplanarity is expected even in the zero-temperature limit, due to quantum
 fluctuations \cite{Herrero}. 
The Reader is referred to the recent  survey 
 \cite{Deng} for an overview of  ripple-formation  mechanisms and possible
 applications.
On the other hand, free graphene samples in
 absence of support have the tendency to roll-up in tube-like
 structures \cite{Lambin}. 

The phenomenon of  rippling and rolling-up in graphene is here tackled
from the molecular-mechanical viewpoint. The actual configuration of a
graphene sheet is identified with a three-dimensional deformation of
the ideal hexagonal lattice. To each deformation we associate a
configurational energy   which takes nearest-neighbor and  next-to-nearest-neighbor
two-body interactions \cite{Brenner90, Stillinger,Tersoff}  into account and favors locally  the specific bonding mode in graphene.

Our main result is a complete classification of ground-state deformations. We
show that  such  ground states are locally not flat, as specific nonplanar
optimal configurations ensue. In particular, two different optimal
configurations for single hexagonal cells are identified. Geometric
compatibility forces these optimal cells to combine in specific
patterns in order to give rise to global deformations. This fact
allows us to  classify  ground states, which correspond either to
rippled or to rolled-up structures, see Theorem \ref{th:characterization}. 

Before closing this introduction, let us review the literature on the
mathematical modeling of graphene via Molecular Mechanics.
The first {\it global-minimality} result for graphene in two
dimensions has to be traced
back to {\sc E \& Li} \cite{E-Li09} who investigate the so-called {\it thermodynamic
  limit} as the number of atoms  tends  to infinity. Their result corresponds
to  an  extension of the seminal theory by {\sc Theil} \cite{Theil06} to
three-body interaction energies favoring $2\pi/3$ bond angles.  More recently, {\sc Farmer,
  Esedo\={g}lu, \& Smereka} \cite{Smereka15} obtained an analogous
 result  by assuming the
three-body energy term to favor $\pi$ bond angles, which calls
for the minimality of graphene among frustrated configurations. 

In case of a {\it
  finite} number of atoms in two dimensions, graphene patches are
identified as the only ground states in
\cite{Mainini-Stefanelli12} and  are  characterized in terms  of  a discrete isoperimetric
inequality in \cite{Davoli15}. The emergence of a
hexagonal Wulff shape as the number of atoms
increases can be also quantitatively checked  \cite{Davoli15}. 

If one allows the
configuration to be three-dimensional, flat graphene is no more
expected to be a ground state \cite{Mainini-Stefanelli12}. By reducing
to nearest-neighbor interactions, it can
nonetheless be checked to be a local minimizer, under specific
assumptions on the interaction potentials \cite{stable}. This stability
analysis allows to tackle other carbon nanostructures as well,
including nanotubes \cite{tube,numeric-stability,stability},
fullerenes \cite{Friedrich16,stable}, diamond \cite{stable}, 
carbyne  stratified configurations  \cite{Lazzaroni17}.

As  concerns  rippling, one has to mention   the recent  paper 
\cite{Davini} where the Gaussian stiffness of graphene, namely its
tendency to favor   non-null    Gaussian-curved configurations, is investigated via a
discrete-to-continuum procedure. The  aim there is  to obtain an
analytical expression for  the Gaussian stiffness by focusing on a
specific choice of the functional. 
 In contrast, our focus is here on energetics and global geometries of
 ground states under general qualitative 
 assumptions on the configurational energy.

The occurrence of nonflat and rolled-up ground states  can be
avoided by additionally imposing periodic boundary
conditions. Experimentally, this corresponds to clamp the edges
of a suspended graphene sample. 
In this case, by
extending the energy to include  third-neighbor  interactions, we prove
in the companion paper
\cite{Friedrich17} that some specific optimal ripple length can be
identified, independently of the  sample size.   This provides an analytical validation to
the computational findings in \cite{Fasolino}.

\section{Energy}\label{sec:energy}

The focus of this paper is on global minimization in three
dimensions. We restrict the class of admissible configurations to
 deformations $y :  H \to \Rz^3$ of the {\it hexagonal lattice}
\begin{align*}
H&=\{sa+tb+rc\;:\; s,\, t\in \Zz, \, r=0,1\}
\end{align*}
where $a=(3/2,\sqrt{3}/2)$, $b=(0,\sqrt{3})$, and $c=(1,0)$.  In particular, the reference
configuration  as well as all  atom  coordinations (neighbors) are  kept fixed. We call $a$, $b$, and $a-b$ {\it coordinate directions of} $H$ and term  {\it
  hexagonal graph} the graph connecting all first neighbors in $H$. A {\it reference cell} is any  $\{x_1,\dots,x_6\}$ corresponding to a simple cycle in the hexagonal
graph and we call {\it cell} its image $\{y_1,\dots,y_6\}$ through
$y$, namely $y_i=y(x_i)$. The labeling of the atoms in each reference
cell is always meant to be arranged counterclockwise with $x_1=na+mb$
to be such that $n+m$ is minimal in the reference cell, see Figure
\ref{deformation}. 
\begin{figure}[h]
  \centering
  \pgfdeclareimage[width=140mm]{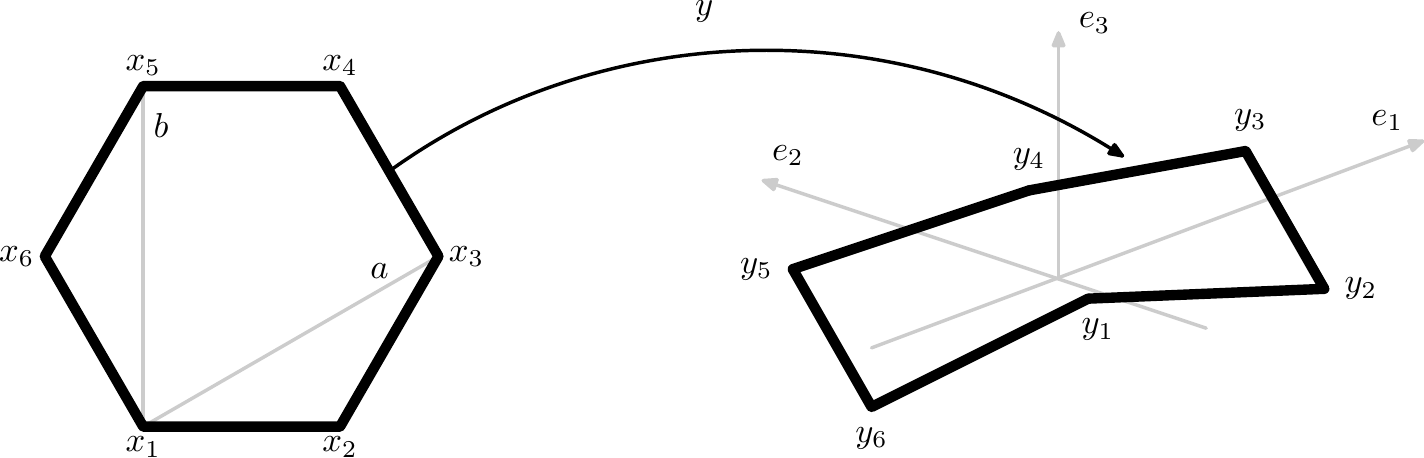}{deformation} 
\pgfuseimage{deformation}
\caption{The reference cell $\{x_1, \dots,x_6\}$ (left) and the
  cell $\{y_1,\dots,y_6\}$ (right).}\label{deformation}
\end{figure}

The {\it cell energy} of the cell $\{y_1,\dots,y_6\}$ is given by 
$$E_{\rm cell}(y_1,\dots,y_6) = \frac12 \sum_{i=1}^6 v_2(|y_i -y_{i-1}|) + \sum_{i=1}^6 v_2(|y_i -y_{i-2}|)+  \sum_{i=1}^6v_3(\theta_i),$$
where sums in the indices are meant modulo $6$ throughout. The first term corresponds to   {\it
  nearest-neighbors} and the second term to  {\it next-to-nearest-neighbors}.  The factor $1/2$
reflects the fact that each segment $\{y_{i+1},y_i\}$, called {\it bond} in the following, is contained in two adjacent hexagonal cells. With  $\theta_i$ we indicate  the {\it bond angle}
at $y_i$ formed by the segments $\{y_{i+1},y_i\}$ and $\{y_{i-1},y_i\}$ which is less or equal to $ \pi$.

We assume that the   {\it two-body}  interaction potential
$v_2:\Rz^+\to[-1,\infty)$ attains its minimum value  only   at
$1$ with $v_2(1) = -1$. Moreover, we suppose that
$v_2$ is continuous and decreasing on $(0,1)$ (i.e., \emph{short-range repulsive}) and $v_2$
increasing on $[1,\infty)$ (\emph{long-range attractive}).   Furthermore, we suppose  that $v_2$   is differentiable in   $ (5/4,\sqrt{3}]$   with
$v_2' >0$.  
The  {\it three-body} interaction density
$v_3: [0,\pi]\to[0,\infty)$
is assumed to be continuous and to attain the minimum value $0$
only at $2\pi/3$  where it is differentiable.  These basic assumptions correspond to the fact
that $sp2$ covalent bonds in carbon 
are characterized by some reference bond length, here normalized to 1,
and a reference bond angle of amplitude $2\pi/3$ \cite{Clayden12}.
Note that $E_{\rm cell}$  has a bounded sublevel  (among cells with barycenter zero) 
 and is continuous. As such, it
admits minimizers, which we call {\it optimal cells}. These will be
characterized in Proposition \ref{CZ}.   For a fine
characterization of the minimizers,   some additional qualification on
$v_2$ and $v_3$ will be needed,  see conditions   \eqref{minimal}-\eqref{convex} below.

We identify the deformation $y:H \to \Rz^3$ with the collection of its cells. Furthermore, cells are identified via the inverse of  $y$ to their reference cells and these are labeled in terms of their barycenters. Indeed, barycenters of reference cells form the {\it triangular lattice}
\begin{align*}
T&=(1/2,\sqrt{3}/2)+\{sa+tb\;:\; s,\, t\in \Zz\}.
\end{align*}
We will hence equivalently indicate cells as $\{y_1, \dots,y_6\} \in
(\Rz^3)^6$ or $(s,t)\in \Zz^2$, where $sa+tb$ is the barycenter of the corresponding reference cell $\{x_1,\dots,x_6\}$.

The {\it energy} of the deformation $y:H \to \Rz^3$ is then defined as
$$E(y) = \sup_{m\in \Nz}\left(\frac{1}{\# (T\cap B_m)} \sum_{(s,t)\in  T\cap B_m} E_{\rm cell} (s,t)\right)$$
where $B_m \subset \Rz^2$ is the ball centered at $0$ having radius
$m$. A deformation is called a {\it ground state} if it minimizes  the energy $E$. Note that $E$ corresponds to the supremum of cell-energy densities on bounded sets of cells. This immediately entails the following. 

\begin{proposition}[Only optimal cells]
A deformation is a ground state if and only if all its cells are optimal.
\end{proposition}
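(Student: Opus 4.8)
The plan is to show that the ground-state energy equals $e_0 := \min E_{\rm cell}$ — the latter being a genuine minimum since, as recalled above, $E_{\rm cell}$ is continuous and has a bounded sublevel among cells with barycenter at the origin — and that it is attained exactly at deformations all of whose cells are optimal. The starting, purely formal, observation is a uniform lower bound: for any deformation $y$ and any $m\in\Nz$, the quantity $\frac{1}{\#(T\cap B_m)}\sum_{(s,t)\in T\cap B_m}E_{\rm cell}(s,t)$ is an average of numbers each $\ge e_0$, hence is itself $\ge e_0$; taking the supremum over $m$ gives $E(y)\ge e_0$ for every deformation $y$.

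For the ``if'' direction, suppose all cells of $y$ are optimal. Then every average above equals exactly $e_0$, so $E(y)=e_0$; together with the lower bound this gives $E(y)=e_0\le E(y')$ for all deformations $y'$, so $y$ is a ground state (and, in particular, the ground-state energy is $e_0$ once one such deformation is known to exist). For the converse, let $y$ be a ground state. Granting the existence of at least one all-optimal deformation — so that, by the previous step, $\inf_{y'}E(y')=e_0$ — we get $E(y)=e_0$. Then, for each $m$, the average over $T\cap B_m$ is simultaneously $\le E(y)=e_0$ and $\ge e_0$ by the lower-bound step, hence equals $e_0$; being an average of terms each $\ge e_0$, this forces $E_{\rm cell}(s,t)=e_0$ for every $(s,t)\in T\cap B_m$. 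Letting $m\to\infty$ yields $E_{\rm cell}(s,t)=e_0$ for all cells, i.e. every cell of $y$ is optimal.

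The only ingredient here that is not bookkeeping with suprema of averages is the existence of a single deformation all of whose cells are optimal (equivalently, $\min_{y'}E(y')=e_0$), and I expect this to be the real point. It can be provided either by assembling the optimal cells identified in Proposition~\ref{CZ} into one of the periodic or rolled-up configurations built for Theorem~\ref{th:characterization}, or, independently of those results, by a compactness and diagonal argument along a minimizing sequence, using the bounded barycenter-zero sublevels of $E_{\rm cell}$ together with the continuity (hence lower semicontinuity) of each finite average.
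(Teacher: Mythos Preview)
Your argument is essentially the paper's: both establish $E\ge E^*:=\min E_{\rm cell}$, note that an all-optimal deformation attains $E^*$ and is therefore a ground state, and derive a contradiction (the paper directly, you by forcing each finite average to equal $e_0$) from the presence of a single nonoptimal cell. Your explicit flag on the existence of an all-optimal deformation is a fair point that the paper leaves implicit here and secures only later via the explicit $Z$/$C$ constructions of Sections~\ref{sec:CZ}--\ref{sec:char}; your alternative compactness/diagonal route, however, is not obviously workable as stated, since bounded sublevels for a \emph{single} barycenter-zero cell do not by themselves yield compactness for a minimizing sequence of full deformations with the required matching of shared atoms.
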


\begin{proof}[\bf Proof]
By letting $E^* = \min E_{\rm cell}$, we readily check that $E \geq
E^*$. If all cells are optimal, we have $E=E^*$ and the deformation is
a ground state. On the other hand, let $E=E^*$ and assume by contradiction that the cell $(s,t)\in T\cap B_m$ is not optimal. Then,
$$E \geq \frac{1}{\# (T\cap B_m)} \sum_{(s,t)\in  T\cap B_m} E_{\rm cell} (s,t) > E^*,$$ 
contradicting minimality.
\end{proof}

In the following, some quantitative specifications on the interaction
densities $v_2$ and $v_3$ will be assumed. These are intended to
ensure that optimal cells indeed have a hexagonal-like shape.
In particular, we will ask for a small parameter   $0<\delta\leq
0.2$   such that 
\begin{align}
&v_2(1-\delta)  > 11 + 12  v_2(\sqrt{3}),\label{minimal}\\
&v_2(1+\delta)  >  - 1 +  12  v_2 (\sqrt{3}) -  12  v_2(\sqrt{3}(1-\delta)^2), \label{maximal}\\
& v_3(\theta)  >   6 + 6 v_2
(\sqrt{3}) \quad \text{if} \ \ | \theta -2\pi/3 |\geq \delta, \label{angle}\\
& (\ell_1, \ell_2,\theta) \mapsto  \frac{1}{4}  v_2(\ell_1) +  \frac{1}{4} 
v_2(\ell_2) + v_2 \big((\ell_1^2 + \ell_2^2 - 2 \ell_1\ell_2 \cos
\theta)^{1/2}\big) + v_3 (\theta) \nonumber\\
& \ \ \ \ \ \ \ \  \ \ \ \ \ \ \ \ \text{is strictly convex for} \ \
|\ell_1 -1|<\delta, \ |\ell_2 -1|<\delta, \ |\theta-2\pi/3|<\delta. \label{convex}
\end{align}
Properties \eqref{minimal}-\eqref{maximal} entail that
first-neighbor bond lengths range between $1-\delta$ and $1+\delta$
(note that $\sqrt{3}$ is the second-neighbor distance in $H$),
whereas \eqref{angle} ensures that the bond angles of the
optimal cell are $\delta$-close to $2\pi/3$. Eventually, assumption
\eqref{convex} yields that the contribution of first-neighbors is
strong enough to entail the symmetry of the optimal cell,
see Proposition \ref{optimal}.

Assumptions
\eqref{minimal}-\eqref{convex} will be tacitly assumed in the rest of
the paper. Note that these are compatible with a choice of densities
$v_2$ and $v_3$ growing sufficiently fast out of their minima  and
$v_2$ is sufficiently flat but increasing around $\sqrt{3}$. In
particular, the quantitative assumptions  on  $v_2$  introduced by {\sc Theil}
\cite{Theil06} (see also \cite{E-Li09,Smereka15}) imply \eqref{minimal}-\eqref{maximal}.
As a
matter of illustration, one can choose  the Lennard-Jones-like
potential $v_2$ and the Tersoff term $v_3$ \cite{Tersoff}
$$v_2(\ell) =   (a-1)  \,\ell^{-a} - a\, \ell^{ -a+1 }  \qquad v_3(\theta) =
\kappa  (1/2 +\cos \theta)^2$$
with $\kappa$ large enough (note that $v_2$  has minimum
$-1$ in $\ell =1$). For instance, one can choose  $a=18$, 
$\kappa=600$, and $\delta=0.12$.

\section{Optimal cells}\label{sec:optimal}

The aim of this section is to prove that optimal cells have specific
bonds and angles. Such  a  property
will be used in Section \ref{CZ} in order to characterize completely 
optimal cells.

\begin{proposition}[Bonds and angles of optimal cells] \label{optimal}  All bonds of an optimal cell have length  $\ell^* \le 1$  and all angles have amplitude $\theta^*<2\pi/3$, where $\ell^*$ and $\theta^*$ are uniquely determined in terms of  the energy.
\end{proposition}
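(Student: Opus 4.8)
The plan is to reduce the problem to a constrained minimization over a single hexagonal cell and to exploit the convexity assumption \eqref{convex} to force all bonds and all angles to coincide. First I would record the qualitative constraints already guaranteed by \eqref{minimal}--\eqref{angle}: in an optimal cell every first-neighbor bond length $\ell_i := |y_i - y_{i-1}|$ satisfies $|\ell_i - 1| < \delta$, and every bond angle $\theta_i$ satisfies $|\theta_i - 2\pi/3| < \delta$. The argument here is by comparison with a planar regular hexagon of side $1$, which has cell energy $6 v_2(1) + 6 v_2(\sqrt 3) + 0 = -6 + 6 v_2(\sqrt 3)$ (using that the second-neighbor distance in a regular unit hexagon is $\sqrt 3$ and that a regular hexagon realizes all angles equal to $2\pi/3$); any cell violating one of the bound constraints incurs, by \eqref{minimal}, \eqref{maximal}, or \eqref{angle}, an energy strictly larger than this value, hence cannot be optimal. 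I would also need that the six second-neighbor distances $|y_i - y_{i-2}|$ are then close to $\sqrt 3$ and in particular lie in the interval $(5/4, \sqrt 3\,]$ where $v_2$ is differentiable with $v_2' > 0$; this follows from a triangle computation $|y_i - y_{i-2}|^2 = \ell_i^2 + \ell_{i-1}^2 - 2\ell_i\ell_{i-1}\cos\theta_{i-1}$ together with the $\delta$-smallness just established, choosing $\delta \le 0.2$ as in the hypotheses.

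Next I would rewrite $E_{\rm cell}$ as a sum of six identical "vertex blocks." Group the terms so that the energy becomes
$$E_{\rm cell}(y_1,\dots,y_6) = \sum_{i=1}^6 \Big( \tfrac14 v_2(\ell_i) + \tfrac14 v_2(\ell_{i-1}) + v_2\big((\ell_i^2 + \ell_{i-1}^2 - 2\ell_i\ell_{i-1}\cos\theta_{i-1})^{1/2}\big) + v_3(\theta_{i-1}) \Big),$$
where I have distributed the $\tfrac12 v_2(\ell_i)$ nearest-neighbor terms among the two angles adjacent to the bond of length $\ell_i$, and matched each second-neighbor term $v_2(|y_i - y_{i-2}|)$ with the angle $\theta_{i-1}$ at the shared vertex $y_{i-1}$. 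Each summand is exactly the function $f(\ell_1,\ell_2,\theta)$ appearing in \eqref{convex}, evaluated at $(\ell_{i-1}, \ell_i, \theta_{i-1})$. Thus the cell energy depends only on the twelve scalars $(\ell_1,\dots,\ell_6,\theta_1,\dots,\theta_6)$, and on the region where all these lie in the $\delta$-neighborhood of $(1,2\pi/3)$ it is a sum of six copies of the strictly convex function $f$.

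The core of the argument is then the symmetrization step. Given an optimal cell, form the twelve-tuple $(\ell_i,\theta_i)$ and its five rotated copies obtained by cyclically shifting indices; average them to get the "balanced" tuple with all bonds equal to $\bar\ell := \tfrac16\sum \ell_i$ and all angles equal to $\bar\theta := \tfrac16\sum\theta_i$. By convexity of $f$, the balanced tuple has cell energy no larger than the original; but this balanced tuple must itself be realizable as an actual planar cell — a hexagon with all sides $\bar\ell$ and all angles $\bar\theta$ closes up precisely when $\bar\theta = 2\pi/3$, and in general one checks (via the closure condition $\sum_{i}$ of edge vectors $=0$, or by a direct elementary planar construction) that the balanced data correspond to a genuine, possibly non-closed but here automatically closed, configuration; more carefully, since the original optimal cell already realizes these as admissible data and the constraints defining admissibility (the six triangle-law relations linking $\ell_i,\ell_{i-1},\theta_{i-1}$ to the second-neighbor distances, plus closure) are preserved under the averaging because we are averaging over the symmetry group, the balanced tuple is admissible. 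Hence its energy is $\ge E^*$, and combined with the reverse inequality from convexity it equals $E^*$, so the balanced cell is also optimal. Strict convexity of $f$ now forces the original tuple to have already been balanced: if some $\ell_i \ne \ell_j$ or $\theta_i \ne \theta_j$, Jensen's inequality would be strict on at least one of the six blocks, giving strictly smaller energy, a contradiction. Therefore all bonds of an optimal cell share a common length $\ell^*$ and all angles a common amplitude $\theta^*$, both lying in the $\delta$-neighborhood, and both are determined as the unique minimizer of the one-block problem $\min f(\ell,\ell,\theta)$ over $|\ell-1|<\delta$, $|\theta-2\pi/3|<\delta$ — uniqueness again from strict convexity — which also shows $\ell^*$ and $\theta^*$ depend only on $v_2,v_3$, i.e. on the energy.

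It remains to locate $\ell^*$ and $\theta^*$ relative to $1$ and $2\pi/3$, and this is where I expect the only real computation. At the minimizer, the first-order conditions $\partial_\theta [f(\ell,\ell,\theta)] = 0$ and $\partial_\ell[f(\ell,\ell,\theta)] = 0$ must hold (the minimizer is interior by the bound constraints being strict). The $\theta$-equation reads $v_2'(d)\cdot \partial_\theta d + v_3'(\theta) = 0$ where $d = \ell\sqrt{2(1-\cos\theta)}$ is the second-neighbor distance; since $\partial_\theta d = \ell^2 \sin\theta / d > 0$ on our range and $v_2'(d) > 0$ there (as $d \in (5/4,\sqrt3\,]$), we get $v_3'(\theta^*) < 0$, and since $v_3$ has its minimum at $2\pi/3$ and is differentiable there with $v_3'(2\pi/3) = 0$, this forces $\theta^* < 2\pi/3$ (the derivative is negative to the left of the minimum). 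For the bond length, the $\ell$-equation gives $\tfrac12 v_2'(\ell^*) + v_2'(d)\cdot(\partial_\ell d) = 0$ with $\partial_\ell d = d/\ell > 0$ and $v_2'(d) > 0$, hence $v_2'(\ell^*) < 0$, which since $v_2$ is decreasing on $(0,1)$ and increasing on $[1,\infty)$ with minimum at $1$ forces $\ell^* < 1$ (indeed $\ell^* \le 1$, with strict inequality as soon as $v_2'$ is strictly negative just left of $1$; in any case $\ell^* \le 1$ as claimed). The main obstacle, and the point requiring care rather than difficulty, is making the symmetrization rigorous — namely verifying that averaging the scalar data over the cyclic group really does produce \emph{admissible} cell data (the closure/realizability constraint is the subtle one, since a six-tuple of lengths and angles does not a priori glue into a hexagon in $\Rz^3$); I would handle this either by checking that closure of the edge-vector sum is an affine (hence convexity-preserving) condition after a suitable parametrization, or, more robustly, by noting that the minimum in \eqref{convex} over the \emph{unconstrained} twelve scalars is already attained at the balanced point and is a lower bound for $E^*/6$, while the planar regular configuration with side $\ell^*$ and angle $\theta^*$ — which does close up once we also optimize, or which we construct explicitly — attains it.
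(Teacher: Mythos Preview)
Your overall plan---a priori bounds from \eqref{minimal}--\eqref{angle}, decomposition into six vertex blocks, then Jensen via the strict convexity \eqref{convex}---is exactly the paper's route, and that part is sound. (A small slip: the reference planar hexagon has cell energy $-3 + 6v_2(\sqrt3)$, not $-6 + 6v_2(\sqrt3)$; you dropped the $\tfrac12$ on the first-neighbor sum.) Your worry about realizability of the averaged data is legitimate and the paper deals with it implicitly: since the averaged angle $\bar\theta$ is the mean of the $\theta_i$ of an \emph{actual} closed hexagon in $\Rz^3$, Fenchel's inequality for closed polygons gives $\sum\theta_i\le 4\pi$, hence $\bar\theta\le 2\pi/3$, and then the explicit $Z$ (or $C$) cell with all bonds $\bar\ell$ and all angles $\bar\theta$ exists as a competitor. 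This closes the Jensen argument cleanly without any of the convoluted alternatives you sketch.

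The genuine gap is in your final step, where you locate $\ell^*$ and $\theta^*$ via first-order conditions. The hypotheses do \emph{not} give differentiability of $v_2$ at $\ell^*$ (only on $(5/4,\sqrt3\,]$) nor of $v_3$ at $\theta^*$ (only at $2\pi/3$), so the equations $\partial_\ell f=0$ and $\partial_\theta f=0$ are not available. Moreover, even granting differentiability, your inference ``$v_3'(\theta^*)<0$ forces $\theta^*<2\pi/3$'' fails: nothing in the assumptions makes $v_3$ monotone on either side of $2\pi/3$, so a negative derivative at some $\theta^*>2\pi/3$ is perfectly consistent with $v_3$ having its unique minimum at $2\pi/3$. (There is a further issue: if $\theta^*>2\pi/3$ then the second-neighbor distance $d$ can exceed $\sqrt3$, and you lose the hypothesis $v_2'(d)>0$ altogether.) The paper avoids all of this. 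For $\ell^*\le 1$ it argues by monotonicity alone: if $\ell^*>1$, shrinking $\ell^*$ decreases both $v_2(\ell^*)$ and $v_2(d)$ since $v_2$ is increasing on $[1,\infty)$ and $d>1$. For $\theta^*$, it first uses $\sum\theta_i\le 4\pi$ to get $\theta^*\le 2\pi/3$, and then excludes equality by a one-sided comparison at the single point where differentiability \emph{is} assumed: at $\theta=2\pi/3$ one has $v_3'(2\pi/3)=0$ while $\partial_\theta v_2(d)>0$ (here $d=\sqrt3\,\ell^*\in(5/4,\sqrt3\,]$), so decreasing $\theta$ slightly from $2\pi/3$ strictly lowers the energy.
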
 

\begin{proof}[\bf Proof]
Recall assumptions \eqref{minimal}-\eqref{convex} and  let
$\{y_1,\dots,y_6\}$ be an optimal cell. We first show that $|y_j -
y_{j-1}| \in ( 1- \delta, 1+ \delta)$ and $\theta_j \in (2\pi/3 - \delta, 2\pi/3 + \delta)$ for all $j=1,\dots, 6$. 

In case $|y_j -
y_{j-1}| \leq  1- \delta$ for some $j=1,\dots, 6$, one has that 
\begin{align*}
  E_{\rm cell} (y_1,\dots,y_6) &=  \frac12  \sum_{i=1}^6v_2(|y_i -
y_{i-1}|) +  \sum_{i=1}^6 v_2(|y_i - y_{i-2}|) + \sum_{i=1}^6
v_3(\theta_i)\\
&\stackrel{\rm (a)}{\geq}   \frac12  v_2(1-\delta) +   \frac12  \sum_{i\not = j}v_2(|y_i -
y_{i-1}|) +  \sum_{i=1}^6 v_2(|y_i - y_{i-2}|) + \sum_{i=1}^6
v_3(\theta_i)\\
& \stackrel{\rm (b)}{ \ge }    \frac12  v_2(1-\delta)  - \frac{5}{2} - 6  +  \sum_{i=1}^6
v_3(\theta_i)\\
& \stackrel{\eqref{minimal}}{ > }   -3  + 6 v_2(\sqrt{3}) =  E_{\rm cell} (x_1,\dots,x_6),
\end{align*}
where we have used that (a) $v_2$ is decreasing in $(0,1)$ and (b)
$v_2\geq -1$. This contradicts optimality as the reference cell
$\{x_1,\dots,x_6\}$, i.e. the identity deformation, would have strictly lower energy. We conclude that
all first-neighbor bonds have to have at least length $1-\delta$.

Assume now that some  bond angle $\theta_j$ is such that $|\theta_j -
2\pi/3|\geq \delta$. Then
\begin{align*}
  E_{\rm cell} (y_1,\dots,y_6) &=   \frac12  \sum_{i=1}^6v_2(|y_i -
y_{i-1}|) +  \sum_{i=1}^6v_2(|y_i - y_{i-2}|) + \sum_{i=1}^6
v_3(\theta_i)\\
&\geq  -  9 + v_3(\theta_j)  \stackrel{\eqref{angle}}{>}  -3  + 6 v_2(\sqrt{3}) =  E_{\rm cell} (x_1,\dots,x_6)
\end{align*}
which again contradicts optimality. We have hence proved that all bond
angles  $\theta$  
necessarily  satisfy  $| \theta  -
2\pi/3|<\delta$.

Basic trigonometry  together with the least size of the bond lengths and bond angles   ensures that second-neighbor bonds have at least
length 
\begin{align}
  &2(1-\delta) \sin(\pi/3 - \delta/2) = 2(1-\delta)
  \left(\frac{\sqrt{3}}{2} \cos(\delta/2) - \frac12
    \sin(\delta/2)\right)> \sqrt{3} (1-\delta)^2  > 1 \label{trig}
\end{align}
where we also used that   $0<\delta\leq 0.2$.  Assume now that  $|y_j -
y_{j-1}| > 1 + \delta$ for some $j=1,\dots, 6$. We have that 
\begin{align*}
  E_{\rm cell} (y_1,\dots,y_6) &= \frac12 \sum_{i=1}^6v_2(|y_i -
y_{i-1}|) +   \sum_{i=1}^6 v_2(|y_i - y_{i-2}|) + \sum_{i=1}^6
v_3(\theta_i)\\
&\stackrel{\rm (c)}{\ge}     \frac{1}{2}  v_2(1+\delta)   - \frac{5}{2}  + 6 v_2( \sqrt{3}(1-\delta)^2)  \stackrel{\eqref{maximal}}{>}    -3  + 6 v_2(\sqrt{3}) =  E_{\rm cell} (x_1,\dots,x_6),
\end{align*}
where we have used in (c) that all second-neighbor bonds have length
at least  $\sqrt{3} (1-\delta)^2$,  see \eqref{trig}, and $v_2 $ is
increasing in $(1,\infty)$. The latter inequality once again contradicts optimality and
we conclude that all first-neighbor bond lengths are at most
$1+\delta$.

We have proved that if $\{y_1,\dots,y_6\}$ is optimal,
first-neighbor bond lengths $\ell_i = |y_i - y_{i-1}|$ lie in 
$(1-\delta,1  +  \delta)$ and bond angles $\theta_i$  lie in  $(2\pi/3
- \delta, 2\pi/3 + \delta)$. We can now decompose the cell energy
$E_{\rm cell}$  and use the convexity assumption \eqref{convex} in
order to get that 
\begin{align}\label{inequality}
 E_{\rm cell} (y_1,\dots,y_6)  &= \sum_{i=1}^6 \left( \frac{1}{4}  v_2(\ell_i) +  \frac{1}{4} 
v_2(\ell_{i  +  1}) + v_2 ((\ell_i^2 + \ell_{i  +  1}^2 - 2 \ell_i\ell_{i  +  1} \cos
\theta_i)^{1/2}) + v_3 (\theta_i) \right) \notag \\
&\geq 6 \left(   \frac{1}{2} 
v_2(\ell^*) + v_2 ( \sqrt{2} \ell^*  (1 -\cos
\theta^*)^{1/2}) + v_3 (\theta^*) \right) 
\end{align}
where 
$$ \ell^*=\frac16( \ell_1 + \dots+ \ell_6), \quad  
\theta^*=\frac16( \theta_1 + \dots+ \theta_6).$$
As the  inequality in \eqref{inequality} is strict whenever $\ell_i\not = \ell^*$ or $\theta_i
\not = \theta^*$ for some $i=1,\dots,6$, all bonds of an optimal cell
have length $\ell^*$ and all angles have amplitude $\theta^*$. It
remains to check that  $\ell^* \le 1$  and $\theta^*<2\pi/3$. First, if we
had  $\ell^* > 1$,  one could reduce the energy in \eqref{inequality}
by reducing $\ell^*$ noting that $v_2$ is increasing in $(1,\infty)$
and recalling \eqref{trig}. This, however, would again contradict  
optimality.  On the other hand, we have that 
$$6\theta^* = \theta_1 + \dots+ \theta_6\leq 4\pi$$
as $4\pi$ is the sum of the internal angles of a planar hexagon. In
particular, the equality holds iff $\{y_1,\dots,y_6\}$ is
planar. Hence, we have that $\theta^*\leq 2\pi/3$. However, we can exclude that
$\theta^*=2\pi/3$ for in this case all second neighbors would have
distance  $\sqrt{3}\ell^* \in \sqrt{3}(1-\delta,1]\subset
\sqrt{3}(0.8,1]  \subset (5/4,\sqrt{3}]$.   As $v_2'(\sqrt{3}\ell^*)>0$ and
$v_3'(2\pi/3)=0$, one would then strictly  lower the energy in \eqref{inequality} by reducing $\theta^*$. 
\end{proof}

Before closing this section let us comment on the importance of the
condition $v_2'>0$ in a left neighborhood of $\sqrt{3}$. This has been
used in the proof of Proposition \ref{optimal} in order to check that
$\theta^*$ is strictly smaller than $2\pi/3$. Indeed, if $v_2'$ were
flat in a neighborhood of $\sqrt{3}$, which would correspond to the
case of purely first-neighbor interactions, one would find  $\theta^*=
2\pi/3$, $\ell^*=1$  \cite{stable}, and the optimal cell would be
planar. Correspondingly, the only ground state would be the hexagonal
lattice $H$. 

\section{The $Z$ and the $C$ cells}\label{sec:CZ}

In the  previous  section we have proved that all cells of a ground state
have all bonds of length $\ell^*$ and all bond angles $\theta^*$. The aim of
this section is to check that such properties determine the cell (up
to isometries). More precisely, 
Proposition \ref{CZ} below states that exactly two geometries are possible:
the {\it $Z$ cell} and
the {\it $ C$ cell}.  This naming refers  to the cell
shape, see Figure \ref{CZfigure}, and  has  been inspired by
\cite{Davini}, where this nomenclature is however used for triplets of
adjacent bonds.
\begin{figure}[h]
  \centering
  \pgfdeclareimage[width=140mm]{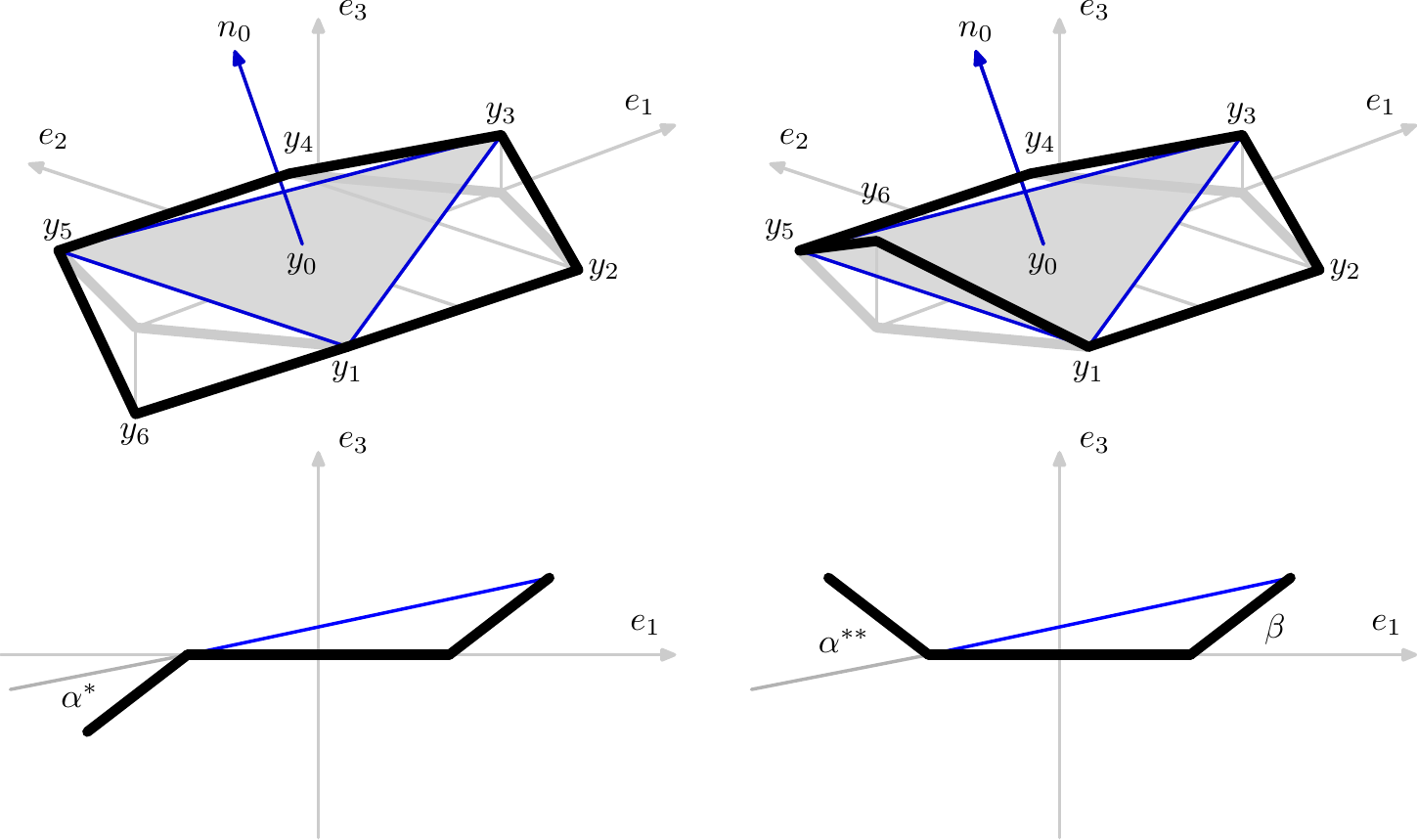}{CZcells} 
\pgfuseimage{CZcells}
\caption{The {$Z$ cell} (left) and {$C$ cell} (right), prospective views
  (top) and side views (bottom).  The normal vector $n_0$ to the darkened triangle $\lbrace y_1,y_3,y_5 \rbrace$,  its center $y_0$,  and the angles $\alpha_*$, $\alpha_{**}$ play an important role in the proof of Proposition \ref{CZ}. We will refer to the angle $\beta$ in Section \ref{sec:char}.  }\label{CZfigure}
\end{figure}
The $Z$ and the $C$ cell 
are specified as follows
\begin{align*}
  Z&=\{ (-\ell^*/2,-v,0),\, (\ell^*/2,-v,0), \, (\ell,0,h), \,
     (\ell^*/2,v,0)),\, (-\ell^*/2,v,0), \, (-\ell,0,-h)\},\nonumber\\
C&=\{ (-\ell^*/2,-v,0),\, (\ell^*/2,-v,0), \, (\ell,0,h), \,
     (\ell^*/2,v,0)),\, (-\ell^*/2,v,0), \, (-\ell,0,h)\}
\end{align*}
where $v$, $\ell$, and $h$ are given by 
\begin{equation}
v = \ell^*\left(\frac12 - \frac12 \cos \theta^*\right)^{1/2},\quad \ell = \frac{\ell^*}{2} -\ell^*\cos \theta^*, \quad h=\ell^*\left(\frac{1}{2} +\frac{1}{2}\cos\theta^*
-  \cos^2\theta^*\right)^{1/2}.  \label{h}
\end{equation}
These explicit values can be obtained by elementary (yet tedious)
trigonometry. Note that if $\theta^*$ were $2\pi/3$ (which is not), the
above formulas would give $v = (\sqrt{3}/2)\ell^*$, $\ell=\ell^*$,
and $h=0$, corresponding indeed to the flat hexagonal lattice of
spacing $\ell^*$. 


A remarkable property of the $Z$ and the $C$ cell is that they have a
pair of parallel bonds which  define  a plane with normal  $e_3$ containing four out of six atoms of the
cells. By considering the two semispaces divided by such plane, the
$Z$ and the $C$ cell are easily distinguishable as the two off-planar
atoms of $Z$ belong to two distinct semispaces, whereas those of $C$
belong to the same semispace. Both cells are symmetric with respect to
the  $(e_1,e_3)$  plane.  In addition, $Z$ is central symmetric as well as
invariant by $2\pi/3$ and $4\pi/3$ rotations about the axis with
direction $(y_3{-}y_1)\wedge(y_5{-}y_1)$  (i.e., direction of $n_0$ in Figure \ref{CZfigure}). 

The main result of this section is the following characterization.

\begin{proposition}[$C$ and $Z$ cells]\label{CZ}
  Optimal cells are either $C$ or $Z$.
\end{proposition}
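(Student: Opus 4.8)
The strategy is to turn the rigidity supplied by Proposition~\ref{optimal} into an explicit low-dimensional system. By Proposition~\ref{optimal}, in an optimal cell $\{y_1,\dots,y_6\}$ all bonds have the common length $\ell^*\le 1$ and all bond angles equal $\theta^*<2\pi/3$. Applying the law of cosines in each isoceles triangle $\{y_{i-1},y_i,y_{i+1}\}$ first yields that all second-neighbour distances coincide, $|y_i-y_{i+2}|=d:=\ell^*\sqrt{2(1-\cos\theta^*)}$ for every $i$; hence $\{y_1,y_3,y_5\}$ and $\{y_2,y_4,y_6\}$ are equilateral triangles of side $d$, and, since $\theta^*<2\pi/3$ forces $d<\sqrt 3\,\ell^*$, the number $r:=\sqrt{(\ell^*)^2-d^2/4}=\ell^*\sqrt{(1+\cos\theta^*)/2}$ is strictly positive. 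Up to an isometry I place the centroid of $\{y_1,y_3,y_5\}$ at the origin, take $n_0=e_3$ as the unit normal of its plane, and fix the in-plane rotation, so that $y_1,y_3,y_5$ are prescribed points of the circle of radius $d/\sqrt3$ in the plane $\{x\cdot e_3=0\}$.

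Each even vertex $y_{2k}$ lies at distance $\ell^*$ from the two consecutive odd vertices $y_{2k-1},y_{2k+1}$, hence on the circle of radius $r$ centred at the edge midpoint $m_k:=\tfrac12(y_{2k-1}+y_{2k+1})$ in the plane through $m_k$ orthogonal to that edge; writing $\tau_k:=m_k/|m_k|$ for the corresponding in-plane perpendicular-bisector direction (which, for an equilateral triangle, is the median direction), I parametrise
\[
y_{2k}=m_k+r\cos\alpha_k\,\tau_k+r\sin\alpha_k\,e_3,\qquad k=1,2,3,
\]
by \emph{folding angles} $\alpha_1,\alpha_2,\alpha_3\in(-\pi,\pi]$, the planar hexagon corresponding to $\alpha_1=\alpha_2=\alpha_3=0$. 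A short computation shows that for any folding angles the six bond lengths and the three bond angles at $y_2,y_4,y_6$ are automatically correct, so that an optimal cell is exactly a solution of the three remaining conditions $(y_{2k}-y_{2k-1})\cdot(y_{2k-2}-y_{2k-1})=(\ell^*)^2\cos\theta^*$ at $y_1,y_3,y_5$ — equivalently, of the requirement that $\{y_2,y_4,y_6\}$ also be equilateral of side $d$. Expanding the inner product, using $e_3\perp\tau_k$ and $\tau_j\cdot\tau_k=-\tfrac12$ for $j\ne k$, and exploiting the threefold rotational symmetry of $\{y_1,y_3,y_5\}$ (under which the data $m_k,\tau_k$ are cyclically permuted), all three conditions collapse to the single \emph{symmetric} equation $E(\alpha_i,\alpha_j)=0$, where
\[
E(s,t)=r^2\sin s\sin t-\tfrac12 r^2\cos s\cos t+r\mu\,(\cos s+\cos t)+K
\]
for explicit constants $\mu,K$ depending only on $\ell^*$ and $\theta^*$; concretely one must solve $E(\alpha_1,\alpha_2)=E(\alpha_2,\alpha_3)=E(\alpha_3,\alpha_1)=0$.

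Finally I would solve this system by subtracting the equations pairwise: using sum-to-product identities, each difference $E(\alpha_i,\alpha_j)-E(\alpha_j,\alpha_k)$ factors through $\sin\tfrac{\alpha_i-\alpha_k}{2}$, so for every pair either two folding angles coincide or a tangent relation $\tan\tfrac{\alpha_i+\alpha_k}{2}=g(\alpha_j)$ holds, with $g$ explicit. Three cases arise. If $\alpha_1=\alpha_2=\alpha_3=:\alpha$, the residual scalar equation $E(\alpha,\alpha)=0$ becomes, after $\sin^2=1-\cos^2$, a quadratic in $\cos\alpha$; I would check that exactly one of its roots lies in $[-1,1]$ when $\theta^*<2\pi/3$ (the other degenerating, as $\theta^*\to2\pi/3$, to the spurious value $\cos\alpha=-3$), and that the resulting configuration, on inserting \eqref{h}, is precisely the $Z$ cell — consistently with its claimed $C_3$-invariance about $n_0$. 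If exactly two folding angles coincide, the repeated value must again solve $E(\cdot,\cdot)=0$ on the diagonal, hence equals the angle just found, and the third one is the remaining root of an equation affine in $(\cos\alpha,\sin\alpha)$; this produces a single further configuration, which I identify with the $C$ cell. The main obstacle is the third case, where $\alpha_1,\alpha_2,\alpha_3$ are pairwise distinct and all three tangent relations hold at once: here one must show, crucially using that $\mu$ and $K$ are not free parameters but the specific functions of $\theta^*$ above (and that any solution is confined near the planar hexagon), that two folding angles are in fact forced to coincide — a contradiction — so that no \emph{twisted} optimal cell exists. It then only remains to note that \eqref{h} gives $h>0$ exactly when $\theta^*<2\pi/3$, which holds strictly by Proposition~\ref{optimal}, so that $Z$ and $C$ are genuinely non-planar and mutually non-congruent, and the classification is complete.
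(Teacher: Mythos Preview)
Your parametrisation by folding angles and the reduction to the symmetric system $E(\alpha_i,\alpha_j)=0$ is correct and gives a genuinely different, more algebraic route than the paper's. The paper instead splits into cases according to the \emph{signs} $p_i=(y_i-y_0)\cdot n_0$ of the even vertices relative to the plane of $\{y_1,y_3,y_5\}$: after relabeling, either all $p_i\le0$, or $p_6\ge0\ge p_2,p_4$. Within each sign case it drops the single constraint $\theta_3=\theta^*$, argues that the conditions $\theta_1=\theta_5=\theta^*$ determine $\alpha_2=\alpha_4$ uniquely as functions of $\alpha_6$, and that the resulting one-variable map $\alpha_6\mapsto\theta_3$ is strictly monotone---hence has a unique root, which must be the known $Z$- (respectively $C$-) value. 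The point of the sign-based split is that it always places the problem in a reflection-symmetric situation, so everything collapses to a single parameter and monotonicity.

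The genuine gap in your proposal is precisely your Case~3 (all $\alpha_k$ pairwise distinct), and the sketch you offer does not close it. The appeal to solutions being ``confined near the planar hexagon'' is unjustified: the only constraints on $(\alpha_1,\alpha_2,\alpha_3)$ are the three equations $E=0$ themselves, so nothing bounds the $\alpha_k$ a priori. (It is true that at $\theta^*=2\pi/3$ the system degenerates to the single solution $(0,0,0)$, but turning this into a quantitative statement on the whole range $|\theta^*-2\pi/3|<\delta$, and then actually using near-planarity to exclude twisted triples, is the entire difficulty---and is not carried out.) Concretely, after substituting $\alpha_j=S-2\sigma_{ik}$ with $S=\alpha_1+\alpha_2+\alpha_3$ and $\sigma_{ik}=\tfrac12(\alpha_i+\alpha_k)$, your three tangent relations become the \emph{same} equation $F(\sigma)=0$ evaluated at $\sigma_{12},\sigma_{13},\sigma_{23}$, where $F$ is a trigonometric polynomial of degree three in $\sigma$; such an $F$ may well have three (or more) zeros, so nothing in the factoring alone forbids a twisted solution subject to the side condition $\sigma_{12}+\sigma_{13}+\sigma_{23}=S$ and the original equations. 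The paper's monotone-folding argument is exactly the mechanism that replaces this missing piece of algebra.
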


\begin{proof}[\bf Proof]
  As bonds
  and bond angles of an optimal cell $  Y =  \{y_1,\dots,y_6\}$ are all equal, the distance of each pair of second neighbors
  is equal as well. In particular, the   three atoms $y_1$, $y_3$, and $y_5$ are the vertices of an
  equilateral triangle and  determine  a plane, which we indicate
  with $A$, see Figure \ref{CZfigure}. Fix an orientation
  on $A$ via the  unit vector  $n_0$ with direction
  $(y_3{-}y_1)\wedge(y_5{-}y_1)$ and    indicate with $\alpha_2$, $\alpha_4$, and
  $\alpha_6$ the incidence angles  with $A$ of the planes $A_2$, $A_4$, and
  $A_6$ containing $\{y_1,y_2,y_3\}$, $\{y_3,y_4,y_5\}$, and $\{y_5,y_6,y_1\}$,
  respectively.  More precisely,  let $n_2$, $n_4$, and
  $n_6$ be the  unit vectors  with directions $(y_3{-}y_2)\wedge (y_1{-}y_2)$, $(y_5{-}y_4)\wedge (y_3{-}y_4)$, and
  $(y_1{-}y_6)\wedge (y_5{-}y_6)$, respectively, and recall that $$\alpha_i
  = \arccos(n_0\cdot n_i) \in [0,\pi]\quad\text{for}\ \ i=2,4,6.$$ 
The geometry of the cell $Y$ is completely determined
  by the three incidence angles $\alpha_2$, $\alpha_4$, and
  $\alpha_6$ and  by  the sign of the products 
$$p_i=(y_i{-}y_0)\cdot n_0 \quad\text{for}\ \ i=2,4,6$$ 
where we have indicated by $y_0$ the  center of $A$, namely
$y_0=(y_1 +y_3 +y_5)/3$.  
 
In the case of the $Z$ cell, all $p_i$ have the
  same sign and all incidence angles  are all
  equal to 
  \begin{align}
   &  \alpha^* =   \arctan\left(  \frac{h}{\ell - \ell^*/2} \right) -  \arctan\left(\frac{h}{\ell+\ell^*/2}\right) 
    \nonumber
  \end{align}
which just depends on $\theta^*$, see \eqref{h}.
In particular, in the setting of Figure \ref{CZfigure} one has that  $p_i<0$ for
  $i=2,4,6$.

In case of the $C$ cell, one has that two out of three products $p_i$ have the same sign and the third has the opposite
  sign. The incidence angles $\alpha_i$ corresponding to the products
  $p_i$ with the same sign are $\alpha^*$ and that
  corresponding to the product with opposite sign equals
\begin{align}
    &\alpha^{**} = \arctan\left(  \frac{h}{\ell - \ell^*/2} \right) +  \arctan\left(\frac{h}{\ell+\ell^*/2}\right)  
    \nonumber
  \end{align}
which again depends on  $\theta^*$ only.
The setting of Figure \ref{CZfigure} corresponds to
 $p_6>0>p_2,\, p_4 $ and
$\alpha_2=\alpha_4=\alpha^*$ and $\alpha_6 =\alpha^{**}$.

Let  an  optimal cell $Y=\{y_1,\dots,y_6\}$ be given and define the
corresponding $\alpha_i$ and $p_i$. By possibly relabeling the atoms (in such a way that neighbors
remain neighbors) we can reduce ourselves to one of the following cases:  (1) $p_i\leq0$ for $i=2,4,6$ or (2)
$p_6\geq 0\geq p_2,p_4$. Note that these cases exhaust all possibilities, being
however not mutually exclusive. The statement follows now by checking that,  up to isometry,  
  $Y=Z$ in Case (1) and $Y=C$ in
Case (2).

Assume that we have $p_i\leq 0$, namely Case (1). Drop the constraint
$\theta_{3}=\theta^*$ by keeping all others (all bonds have length
$\ell^*$ and all bond angles other than $\theta_{3}$ are equal to
$\theta^*$). This uniquely defines
$\theta_{3}$ as a function of
$\alpha_6$, namely $\theta_3 =\theta_{3}(\alpha_6)$. Indeed, there
exists $\alpha^*<\alpha_{\rm max}^* <\pi$ such that for all
$\alpha_6\in [0,\alpha_{\rm max}^*]$ one
can uniquely determine $\alpha_2=\alpha_4 \in [0,\pi] $ with $\theta_{1}=
\theta_{5}=\theta^*$ by keeping $p_2,\,p_4\leq 0$ and for $\alpha_6>
\alpha_{\rm max}^*$ such values $\alpha_2,\,\alpha_4 $ do not exist.
Note that the mapping $\alpha_6\mapsto
\alpha_2=\alpha_4$ is strictly  decreasing.  Moreover,
$\alpha_2(\alpha^*) =\alpha_4(\alpha^*)=\alpha^*$. Indeed, if this was
not the case, the bond angles $\theta_1$ and $\theta_5$ would not be
$\theta^*$.  Corresponding to changes in
$\alpha_2=\alpha_4$ and for $p_2,\,p_4  \le  0$, the angle
$\theta_{3}$ changes as well and the mapping $\alpha_2=\alpha_4
\mapsto \theta_{3}$ is strictly   decreasing.  This entails that the
composed mapping $\alpha_6 \mapsto \theta_{3}(\alpha_6)$ is strictly
 increasing.    Hence, the equation $\theta_{3}(\alpha_6) = \theta^*$
has a unique solution. Such solution is necessarily 
  $\alpha_6=\alpha^*$, for this happens to be the case for
  $Z$. Recalling that 
$\alpha_2(\alpha^*) = \alpha_4(\alpha^*)  = \alpha^*$, we have
hence proved that  $\alpha_i =\alpha^*$ for
$i=2,4,6$,  so that $Y$   is necessarily $Z$. 

Assume now that $p_6\geq 0\geq p_2,p_4$, namely Case (2). Drop the constraint
$\theta_{3}=\theta^*$ by keeping all others. Let
$\alpha^{**}<\alpha_{\rm max}^{**} <\pi$ be given such that for all
$\alpha_6\in [0,\alpha_{\rm max}^{**} ]$ one
finds uniquely $\alpha_2=\alpha_4 \in [0,\pi] $ with $\theta_{1}=
\theta_{5}=\theta^*$ by keeping $p_2,\,p_4\leq 0$ and for $\alpha_6>
\alpha_{\rm max}^{**} $ such values $\alpha_2,\,\alpha_4 $ do not exist. Note that the mapping  $\alpha_6\mapsto
\alpha_2=\alpha_4$  is strictly increasing  and that
$\alpha_4(\alpha^{**})=\alpha_6(\alpha^{**})=\alpha^*$.   Indeed, if this was
not the case, the bond angles $\theta_1$ and $\theta_5$ would not be
$\theta^*$.  On the other hand, the mapping $\alpha_2=\alpha_4
\mapsto \theta_{3}$ is strictly decreasing.  Thus,  the  
composed mapping $\alpha_6 \mapsto \theta_{3}(\alpha_6)$ is strictly
decreasing and the equation $\theta_{3}(\alpha_6) = \theta^*$ has the
only solution $\alpha_6=\alpha^{**}$, for this corresponds to $C$. As $\alpha_2(\alpha^{**}) = \alpha_4(\alpha^{**})  = \alpha^*$, we have proved that $\alpha_2=\alpha_4 =\alpha^*$
and $\alpha_6=\alpha^{**}$.  In particular,  $Y$ is $C$. 
\end{proof}

\section{Classification of ground states}\label{sec:char}

Proposition \ref{CZ} provides a {\it local} description of
ground-state geometries. The purpose of this section is to move from
such  a  local description to the global picture. This is made
possible  as  $Z$ and $C$ cells  can be arranged in three-dimensional space just in few very
specific {\it global} patterns. This eventually allows us to classify 
ground-state deformations in Theorem \ref{th:characterization}.

  In order to state our result, we need to introduce some finer
description of cell geometries. Note indeed that Proposition \ref{CZ}
identifies optimal cells as point sets {\it up to isometries}. Here we need to
specialize this identification by taking into account the indicization of
the atoms as well. In particular, we say that two optimal cells
$\{y_1,\dots,y_6\}$ and $\{z_1,\dots,z_6\}$ are
{\it of the same type} if they are isomorphic via an isometry  $\iota: \Rz^3 \to \Rz^3$ 
with the property that $\iota(y_i)=z_i$  for $i=1,\ldots,6$. 

In order to find all possible {\it types} of
optimal cells,
one has to consider all permutations $\{i_1,\dots,i_6\}$ of the atomic
indices
$\{1,\dots,6\}$ which preserve first neighbors, namely such that $|i_k -
i_{k-1}|=1$ (the sum being modulo $6$). Such permutations are generated
by the two transformations $i \to i+1$ and $i \to -i$.

 As $Z$ cells as point sets are invariant under  $2\pi/3$ rotations about
their axis $n_0$ (see Figure \ref{CZfigure}) and are central symmetric, by
applying such generating transformations to the atomic indices  of $Z$
cells we identify exactly two equivalence classes: We say
that a $Z$ cell $\{y_1,\dots,y_6\}$ is of {\it type} $Z$ if it is
of the same type of the $Z$ cell of Figure \ref{CZfigure} and that it
is of {\it type} $\ove Z$ if it is
of the same type of the $Z$ cell of Figure \ref{CZfigure} up to
letting $y_i
\to y_{-i}$. Type $Z$ cells $\{y_1,\dots,y_6\}$ are
transformed into type $\ove Z$ cells (and viceversa) both by  $y_i \to
y_{i+1}$  or  $y_i \to y_{-i}$.

As $C$ cells are less symmetric than $Z$ cells, the type count for $C$
cells is necessarily higher.
A $C$ cell $\{y_1,\dots,y_6\}$ is said to
be of {\it type} $C$ if it is
of the same type of the $C$ cell of Figure \ref{CZfigure} and to be of
{\it type} $\ove C$ if it is
of the same type of the $C$ cell of Figure \ref{CZfigure}  up to
letting $y_i \to y_{-i}$. On the other hand, a $C$ cell is said to be
of {\it type}
$C_\pm$  ($\ove C_\pm$) if it is
of the same type of the $C$ cell of Figure \ref{CZfigure}  up to
letting $y_i \to y_{i\pm 1}$ ($y_i \to y_{-(i\pm 1)}$, respectively). 
 Type $C$ and $C_\pm$ cells $\{y_1,\dots,y_6\}$ are  respectively 
transformed into type $\ove C$ and $\ove C_\pm$ cells by the transformation $y_i \to y_{-i}$. 

The above provisions define a {\it type function}
$$ \tau: \Zz^2 \to \{Z,\ove Z, C, \ove C, C_+, \ove C_+, C_-, \ove C_-\}$$
which associates to each cell $(s,t)\in \Zz^2$ its type
$\tau(s,t)$. The cells in Figure \ref{CZfigure} are of type $Z$
(left) and type $C$ (right). A  type $\ove Z$ and  type $\ove C$ cell can be
visualized by taking the  reflection of a  type $Z$ and type $C$ cell, respectively,  with respect
to the plane $(e_1,e_3)$.


 Define  the {\it center}
$y_{\rm c}$ of the cell by
$$ y_{\rm c} = \frac14 (y_1+y_2+y_4+y_5).$$
To each
bond $\{y_{i},y_{i+1}\}$ we associate  a {\it bond
   plane} defined as the plane containing the endpoints of the bond
 and the center  $y_{\rm c}$  of the cell, oriented by the  unit
 vector  $n$ with direction 
 $(y_i{-}y_{\rm c})\wedge(y_{i+1}{-}y_{\rm c})$.  
 
 Let now the two cells $(s,t)$ and
 $(s',t')$ with centers $y_{\rm c}$ and $y_{\rm c}'$ share the bond
 $\{y_i,y_j\}$. We define the {\it
  signed incidence angle} $\gamma\in [-\pi,\pi]$ at the bond $\{y_i,y_j\}$ of the corresponding
bond planes  as
$$
\gamma = 
\left\{
\begin{array}{ll}
 \phantom{-} \arccos (n\cdot n') \quad&\text{if} \ \ (y'_{\rm c}{-}y_{\rm c})\cdot (n'{-}n)
                              <   0\\
 -\arccos (n\cdot n') \quad&\text{if} \ \ (y'_{\rm c}{-}y_{\rm c})\cdot (n'{-}n)
                              \geq  0
\end{array}
\right.
$$ 
 where $n$ and $n'$ denote the unit vectors to the bond planes in the cells $(s,t)$ and
 $(s',t')$, respectively.  Note that this definition is invariant under the transformation $(s,t)
\leftrightarrow (s',t')$ and that  $|\gamma|$ is the classical incidence angle between the two
bond planes.

In the following, an important role will be played by the angle
(recall  definitions  \eqref{h}) 
\begin{equation}\label{gammastar}
\gamma^* = 4\arctan(h/v) =  4\arctan \left(\frac{1+\cos\theta^* -
  2\cos^2\theta^*}{1-\cos\theta^*}\right)^{1/2}. 
\end{equation}
Note that $\gamma^*/2$ is the incidence angle of the two planes
containing  the atoms  $\{y_1,y_2,y_3,y_6\}$
 and $\{y_3,y_4,y_5,y_6\}$ of the type $C$ cell, see Figure
 \ref{CZfigure}. For each cell $(s,t)$, we let $\haz
\gamma(s,t)$ be the signed incidence angle at the common bond between
cell 
$(s,t)$ and  cell  $(s,t-1)$. 
 This notation allows us to state  our main result.

\begin{theorem}[Classification  of ground states]\label{th:characterization}
A deformation is a ground state if and only if, possibly up to a
reorientation
of  the reference lattice  $H$,  the type function $\tau$ takes values only in
$\{Z,\ove Z,C,\ove C\}$ and one of the following
two cases occurs
\begin{center}
  \begin{minipage}{0.85\linewidth}
    \vspace{1mm}

    (Zigzag roll-ups) \hspace{9.5mm} $\haz \gamma \equiv -\gamma^*$ and
    $\tau\equiv C$ or $\,\haz \gamma \equiv \gamma^*$ and
    $\tau\equiv  \ove C$. \vspace{2mm}

(Rippled structures)\qquad $\haz \gamma\equiv 0$ and   $t
    \mapsto   \tau(s,t)$   is constant for all $s \in \Zz$,

  \end{minipage}
\end{center}
\end{theorem}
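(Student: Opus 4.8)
The plan is to pass from the local classification (Proposition~\ref{CZ}, together with the finer type analysis) to the global picture by analyzing how two optimal cells sharing a bond must fit together. The key geometric input is this: given that both neighbouring cells have all bonds of length $\ell^*$ and all angles $\theta^*$, the shared bond and the requirement that the two atoms adjacent to it (one in each cell) again sit at bond-length $\ell^*$ and angle $\theta^*$ leaves only finitely many possibilities for the signed incidence angle $\gamma$ at that bond. First I would make this precise: I claim that for \emph{any} pair of adjacent optimal cells the signed incidence angle $\gamma$ at the common bond can only take the values $0$, $\pm\gamma^*$, or $\pm\gamma^*/2$ (the last coming from the intrinsic kink inside a $C$ cell when the shared bond is one of its two ``special'' bonds), and moreover the \emph{type} of one cell is determined by the type of its neighbour together with the value of $\gamma$. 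This is an elementary but lengthy trigonometric computation using the explicit coordinates of $Z$ and $C$ in \eqref{h} and the definition of the bond planes; it is the technical heart of the argument.

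Next I would organize the cells along the three coordinate directions $a$, $b$, $a-b$ of $H$. Fixing a row $t\mapsto(s,t)$ (the $b$-direction), the compatibility relations above force a finite-state constraint: moving from $(s,t)$ to $(s,t+1)$ the type changes according to a deterministic rule once $\haz\gamma(s,t+1)$ is known, and the admissible values of $\haz\gamma$ are themselves constrained by which bonds of the cell are ``special''. A case check over the eight types shows that the only configurations that propagate consistently along an entire row and simultaneously along the transverse directions are: (i) all cells of type $C$ with $\haz\gamma\equiv-\gamma^*$ (or the mirror image, all $\ove C$ with $\haz\gamma\equiv\gamma^*$), and (ii) $\haz\gamma\equiv0$ along the $b$-direction with the type constant along each such row. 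In particular the types $C_\pm,\ove C_\pm$ are excluded in any ground state: they would require, at some bond, an incidence angle $\pm\gamma^*/2$ on one side that cannot be matched by the neighbouring cell's geometry, so such a cell cannot be completed to a full deformation. This is the combinatorial core, and it reduces to checking local ``matching tables'' and then a short propagation argument.

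Then I would verify the two surviving families really are ground states. By the Proposition ``Only optimal cells'' it suffices to exhibit deformations $y:H\to\Rz^3$ all of whose cells are $Z$ or $C$; for case~(i) one checks that the $C$ cells with constant signed incidence angle $-\gamma^*$ wrap consistently around a cylinder — this is exactly the zigzag nanotube construction, where closing up is possible because $\gamma^*$ is commensurable in the relevant sense with the lattice, or, if one does not close up, one gets an infinite rolled sheet — and for case~(ii) one checks that a single admissible profile along the $b$-direction, repeated identically in every row (hence periodic in that one direction and translation-invariant in the transverse direction), produces a valid deformation with all cells optimal. Conversely the argument of the previous paragraph shows every ground state is of one of these forms after reorienting $H$ so that the distinguished direction becomes the $b$-direction.

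The main obstacle I expect is the matching analysis of the $C$ cells: because a $C$ cell has two inequivalent kinds of bonds (the two ``special'' ones carrying the internal $\gamma^*/2$ kink, and the ordinary ones), and four orientational types $C,\ove C,C_\pm,\ove C_\pm$, the bookkeeping of which type-plus-orientation on one side of a bond is compatible with which on the other side — and with which value of $\gamma$ — is delicate, and it is precisely here that one must rule out the $C_\pm$ types and show that a row of $C$ cells forces the global helical/cylindrical structure rather than some aperiodic meander. The $Z$-cell and the $\haz\gamma\equiv0$ cases are comparatively soft; once the $C$ matching table is in hand, the propagation along coordinate directions and the final assembly are routine.
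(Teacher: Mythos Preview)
Your overall architecture---build a local ``matching table'' of admissible signed incidence angles for pairs of adjacent optimal cells, then propagate along coordinate directions---is exactly the paper's approach, and your list $\{0,\pm\gamma^*,\pm\gamma^*/2\}$ of candidate incidence angles matches the paper's Table~1. So the strategy is sound.

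There is, however, a genuine error in how you dispose of the types $C_\pm,\ove C_\pm$. You claim these types ``are excluded in any ground state'' because they would force an unmatched $\pm\gamma^*/2$ kink. This is false: a $C_+$ cell is \emph{geometrically identical} to a $C$ cell (only the atom labeling differs), so a deformation consisting entirely of $C_+$ cells with $\haz\gamma\equiv-\gamma^*$ is a perfectly good ground state---it is the same zigzag roll-up, viewed with $H$ rotated by $2\pi/3$. The paper even says this explicitly just after the theorem statement. The correct argument (the paper's Step~3) is not a local incompatibility but a global coexistence obstruction: if some cell is of type $C$ or $\ove C$, then its entire $b$-column consists of $C/\ove C$ cells (because only these have $\lambda$ bonds to share), while a $C_-$ cell elsewhere would force its entire $a$-row to be $C_-/\ove C_-$; the intersection cell would then have to be both, a contradiction. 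So $C_\pm$ are removed by \emph{choosing} the orientation of $H$, not by geometric impossibility.

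A second point you have underestimated: the value $\pm\gamma^*/2$ does arise in your matching table (for a $(\lambda,\lambda)$ bond between a $C$ and a $\ove C$ cell, and for $(\mu,\mu)$ or $(\nu,\nu)$ bonds), and ruling out the configurations that use it is not automatic. In particular, the alternating pattern $\tau(0,t)=C$ for $t$ even, $\ove C$ for $t$ odd, with $\haz\gamma=\pm\gamma^*/2$, passes every \emph{pairwise} compatibility check and must be excluded by a three- or four-cell argument (the paper's Step~7 and Figure~7). Your proposal does not mention this case; calling the propagation ``routine'' after the matching table is in hand glosses over precisely the step where the argument has real content.
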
 

The fact that the type function 
 takes values exclusively in $\{Z,\ove Z,C,\ove C\}$ and, in
 particular,   values $C_\pm$ and
$\ove C_\pm$ do not occur  is due to  the   reorientation of the reference lattice
$H$. Even without such reorientation, a statement in the spirit of
Theorem \ref{th:characterization} would hold. The four possible values of the type
function would then be either $\{Z,\ove Z,C,\ove C\}$, $\{Z,\ove
Z,C_+,\ove C_+\}$, or $\{Z,\ove Z,C_-,\ove C_-\}$.

 The classification of Theorem \ref{th:characterization} says that  exactly two
 families of ground states exist.
In case $\haz \gamma \equiv  -\gamma^*$ and  $\tau  \equiv  C$ (or, equivalently, $\haz \gamma \equiv  \gamma^*$ and $\tau  \equiv 
\ove C$)  the ground state is   a rolled-up structure which is usually referred to as of {\it zigzag} type, see Figure
\ref{nanotube}. 
\begin{figure}[h]
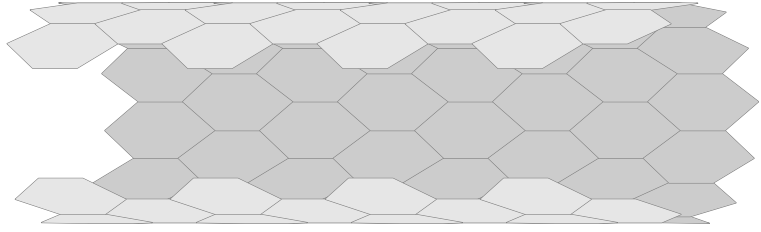

  \centering
  \pgfdeclareimage[width=100mm]{nanotube}{zigzag_nanotube} 
\pgfuseimage{nanotube}
\caption{Zigzag structure: $\tau \equiv C$ and $\haz \gamma \equiv
  -\gamma^*$ (or, equivalently, $\tau \equiv \ove C$ and $\haz \gamma \equiv
  \gamma^*$).}\label{nanotube}
\end{figure}  
 If $\gamma^*= \pi/m$ for some $m \in \Nz$
large enough,  the configuration is a {\it zigzag nanotube} with $m$ cells on each
section.  Note, however, that such condition on $\gamma^*$ is nongeneric
with respect to the choice of the  energy,  see \eqref{gammastar}.
The zigzag ground-state deformation  is  not injective iff
$k\gamma^* = \pi/m + 2\pi j$ for some $k,\, j \in \Nz$.

The second possibility from the  classification of Theorem
\ref{th:characterization}   is that $\haz \gamma  \equiv  0$. In  this case,  
the ground state corresponds to an alternation of
cell types which are constant along the coordinate direction $b$. The
ground state is hence uniquely determined by the sequence of types,
e.g. $\{\dots, C,C,\ove Z, Z,C,\ove C,\dots\}$. All such sequences can
in principle be considered, although some of them give rise to
noninjective deformations or even self-interpenetrating
structures.

The choices
\begin{align*}
 &\{\dots,C,\ove C,C,\ove C, C\dots\},\\
&\{\dots,C,C,\ove C,\ove C,C,C,\ove C,\ove C, C,C\dots\},\\
&\{\dots,C,C,C,\ove C,\ove C,\ove C,C,C,C,\ove C,\ove C,\ove C,C,C, C\dots\},
\end{align*}
originate {\it ripples} with different wave lengths, corresponding to
the different number of copies of $C$ and $\ove C$ in the
sequence. Choices including $Z$ and $\ove Z$ cells can generate
ripples as well, see Figure \ref{pink}.
\begin{figure}[h]
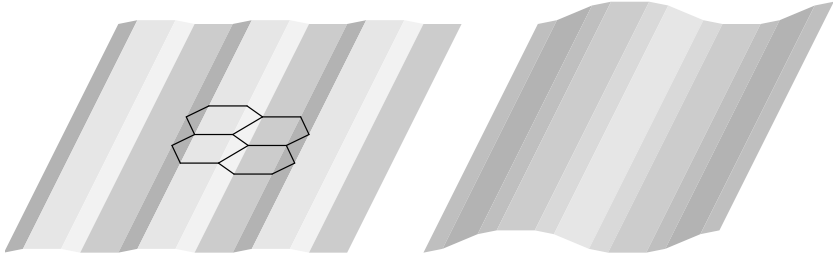

  \centering
  \pgfdeclareimage[width=110mm]{pink}{pinkfluffyunicorn} 
\pgfuseimage{pink}
\caption{Examples of rippled structures for $\haz \gamma\equiv 0$: $\{\dots,C,\ove C,C,\ove C,
  C\dots\}$ (left) and $\{\dots,  \ove Z,  \ove C, Z,  C, \dots\}$ (right).}\label{pink}
\end{figure}

The constant choice $\{\dots,C,C,C,\dots\}$ gives
rise to a rolled-up structure of the so-called {\it armchair} type,
see Figure \ref{pink2}. 
\begin{figure}[h]
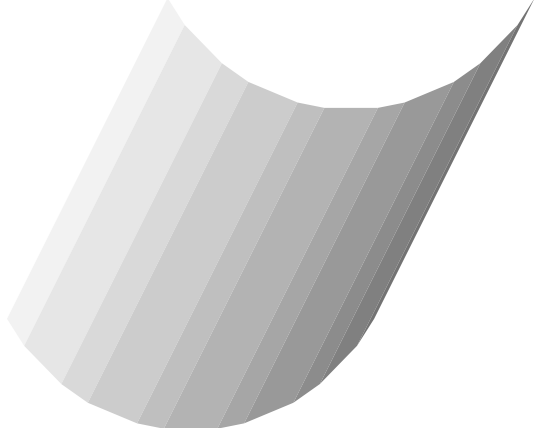

  \centering
  \pgfdeclareimage[width=70mm]{pink2}{pinkfluffyunicorn2} 
\pgfuseimage{pink2}
\caption{Armchair structure: $\{\dots,C,C,C, \dots\}$ for
  $\haz \gamma\equiv 0$.}\label{pink2}
\end{figure}
If one has that 
$$ \beta=\arctan \left( \frac{h}{\ell - \ell^*/2}\right) =
 \frac{\pi}{m}  $$ for some $m \in \Nz$ large enough (see Figure
\ref{CZfigure} bottom right), the
rippled structure $\{\dots,C,C,C,\dots\}$ closes up and we have an
{\it armchair nanotube}  \cite{tube, numeric-stability}  with $m$ atoms on each (nonempty)
section.  Again, the condition on $\beta$ is nongeneric.

In the rippled case, ground states are essentially one dimensional. Indeed,
the sequence of cell types  is  completely characterized by any section with respect
to direction $b^\perp$ in $H$, see Figure \ref{sections}. One can hence introduce an effective energy for such sections by
considering cell centers as particles and favoring a specific
distance  between cell centers  and a specific angle  $\varphi^*$ between segments connecting
neighboring cell centers. We follow this path in \cite{Friedrich17}
where we show that third-neighbor interactions between cell centers
and  certain boundary  conditions select
specific optimal ripple lengths, independently of the  sample size 
(assumed to be sufficiently large). 

Note that for all  $\varphi$  close to $\pi$ one
can find $\theta =  \theta( \varphi)   $ close to $2\pi/3$ so that, by letting all bond angles
of the $C$ and
$\ove C$ cells in
Figure \ref{sections} be $\theta$ (possibly being not optimal), 
the 
segments connecting cell
centers form  $\varphi$  angles
which each other. The ground state corresponds then to
$\theta=\theta^*$ or  $\varphi=\varphi^*$  and one can check that
$$ 2\pi/3 - \theta^* \sim (\pi -  \varphi^* )^2.$$
In particular, by defining $\haz v_3( \varphi ) = v_3( \theta( \varphi )  )$ (and letting $v_3$
be smooth in $2\pi/3$) one has that $\haz v_3$ is minimized in $\pi$
with 
\begin{equation}\label{forpaper2}
\haz v_3'(\pi) = \haz v_3''(\pi) = \haz v_3'''(\pi) = 0\quad
\text{and} \quad  \haz v_3''''(\pi)>0.
\end{equation}

\begin{figure}[h]
  \centering
  \pgfdeclareimage[width=130mm]{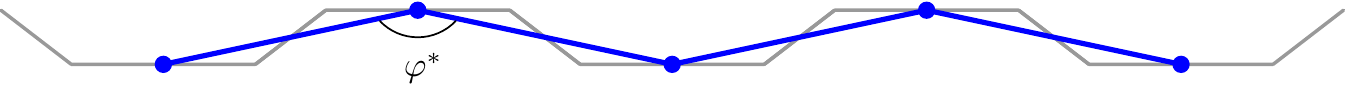}{sections} 
\pgfuseimage{sections}
\caption{Effective description of a $b^\perp$ section of the rippled
  structure  $\{\dots,C,\ove C,C,\ove C,\dots\}$.}\label{sections}
\end{figure}

\begin{proof}[\bf Proof of Theorem \ref{th:characterization}]
  The argument is combinatorial in nature and follows by investigating all possible
  cases. The main idea is that cells of different type sharing a bond
  have a limited number of possible mutual arrangements. We start by
 discussing the local geometry of bonds in Step 1 and turn to 
 arrangements of two cells in Step 2.  The reorientation of the
 reference lattice $H$ is described in Step 3.  The case of three or more cells
 is discussed in  Steps 4-7. Finally, in Step 8 we conclude that only rippled structures and zigzag roll-ups are admissible. 

\noindent{\bf Step 1: Defining the bond type.} 
We start by introducing some notation for the
  various bonds of the different types of cells. By referring to the notation
  of Figure \ref{CCZZfigure}, we have that  the 
atoms $y_1,\,y_2,\,
y_4$, and $y_5$ of each cell are coplanar. The shadings in Figure \ref{CCZZfigure} allude to the fact that
the cells are indeed not flat and the signs 
$+$ and $-$  
illustrate the positioning of $y_3$ and $y_6$ with respect
to the plane containing $y_1,\,y_2,\,
y_4$, and $y_5$.

\begin{figure}[h]
  \centering
  \pgfdeclareimage[width=110mm]{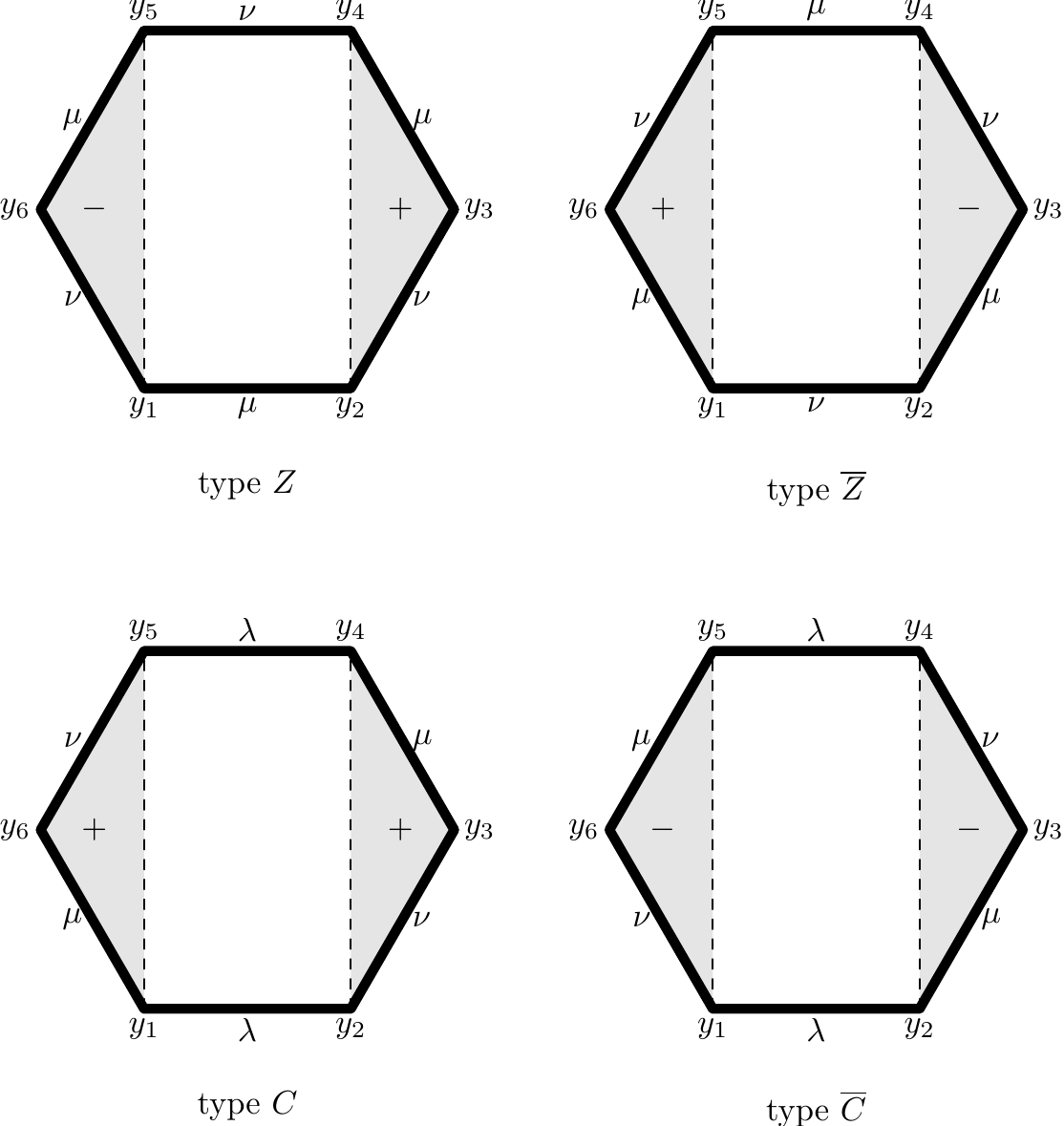}{CCZZcells} 
\pgfuseimage{CCZZcells}
\caption{Notation for the bonds of different cell types  (top
  view). Cells of type $C_\pm$ and $\ove C_\pm$ are not
  illustrated as they are excluded by the reorientation of $H$ of Step
  3.  }\label{CCZZfigure}
\end{figure}

Given the bond $\{y_{i},y_{i+1}\}$ recall that the bond plane containing
$y_{i}$, $y_{i+1}$, and $y_{\rm c}$ is oriented via the unit vector  $n$  with
direction $(y_i{-}y_{\rm c})\wedge(y_{i+1}{-}y_{\rm c})$ and define $u
= (y_{i+2}{-}y_{i-1}){\cdot}n$.
We say that the bond  $\{y_{i},y_{i+1}\}$ is of {\it type} $\lambda$ if
$u=0$, of type $\mu$ if $u>0$, and of type $\nu$ if $u<0$. Note that
the bond $\{y_{i},y_{i+1}\}$ is of type $\lambda$ iff the four atoms
$\{y_{i-1}, y_i, y_{i+1}, y_{i+2}\}$ are coplanar. 

This distinction of
bond types will turn out useful for discussing mutual cell
 arrangements.  In particular, we say that {\it two cells share a $(\mu,\nu)$ bond} if  the common
 bond for such two cells is of type $\mu$ for one cell and of type
 $\nu$ for the other. Analogously for $(\lambda,\lambda)$ bonds,
 $(\lambda,\mu)$ bonds etc.

\noindent{\bf Step 2: Sharing a bond.} The aim of this step is to
classify the possible mutual arrangements of two cells sharing a
bond. This is specified in terms of a corresponding signed incidence
angle. By considering the bond angles at the endpoints of the shared
bonds which are external to the cells (named {\it external}
henceforth), we find the admissible values of the signed incidence angles. All possibilities are listed in Table \ref{angletable}
below. We now comment on its entries.

\begin{figure}[h]
  \centering
  \pgfdeclareimage[width=140mm]{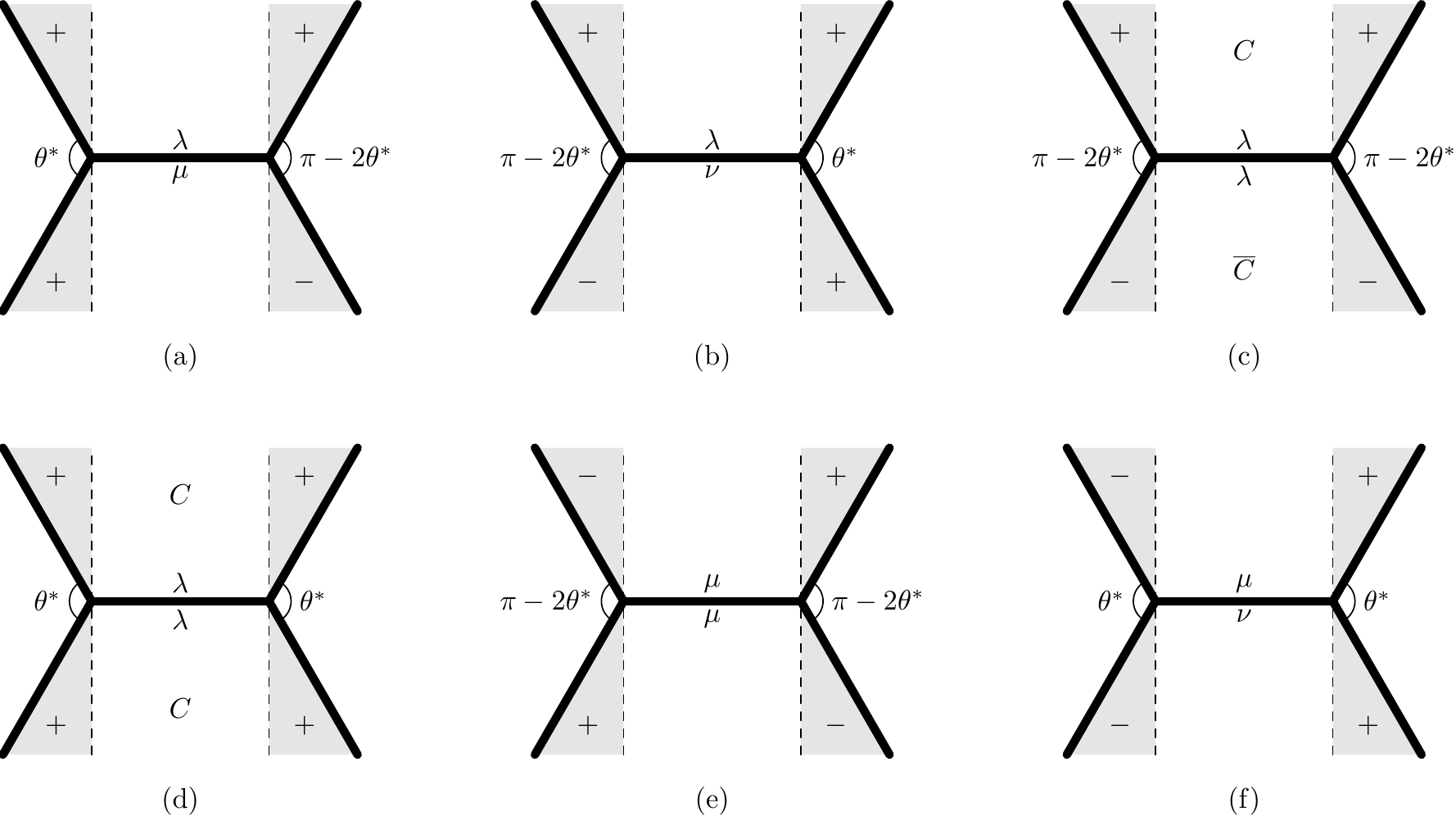}{CZbonds} 
\pgfuseimage{CZbonds}

\caption{Illustration of the various types of shared bonds for
  $\gamma=0$.  If a triangle with sign $+$ and a triangle with sign
  $-$ share an atom,  the three bonds including such atom  are coplanar. Therefore, the external angle is $\pi - 2 \theta^*$. If two triangles with equal sign share an atom, by elementary   
trigonometry the external angle is $\theta^*$, cf. \eqref{h}. 
These cases correspond to those of Figure \ref{CCZZfigure}. In
particular, the cases corresponding to cell types $C_\pm$ and $\ove C_\pm$ are not illustrated as
they are excluded by the reorientation of $H$ of Step 3. }\label{CZbonds}
\end{figure}

\begin{table}[h]
\begin{tabular}{|c|c|c|}
\hline
Type of shared bond & Signed incidence angle & Reference in Figure \ref{CZbonds}\\
\hline
$(\lambda,\mu)$, $(\lambda,\nu)$ & $\not \exists$& (a), (b)\\
\hline
$(\lambda,\lambda)$ ($C$ and $\ove C$)& $ \pm
\gamma^*/2$ (not admissible)& (c)\\
$(\lambda,\lambda)$ (two $C$ cells)& $0 ,\, - 
\gamma^* $& (d) \\
$(\lambda,\lambda)$ (two $\ove C$ cells)& $0 ,\,  
\gamma^*  $& (d)\\ 
\hline
$(\mu,\mu)$, $(\nu,\nu)$ & $\pm \gamma^*/2 $  &(e)\\
\hline
$(\mu,\nu)$&$0$&(f)\\
\hline
\end{tabular}
\caption{Signed incidence angles for various types of shared
  bonds. Note that the case $(\lambda,\lambda)$ for a $C$ and a $\ove
  C$ cell    will be
  eventually proved to be not admissible in Step 7.}\label{angletable}
\end{table}

 \emph{Cases $(\lambda,\mu)$ and $(\lambda,\nu)$}:  As the two atoms  at a 
$\lambda$ bond and their first neighbors are coplanar, by referring to
Figures \ref{CZbonds}(a) and \ref{CZbonds}(b) one realizes that the
external angles for a 
$(\lambda,\mu)$ or $(\lambda,\nu)$ bond cannot be both $\theta^*$, for
any $\gamma$. As a consequence, two cells cannot share 
  a $(\lambda,\mu)$ bond nor a $(\lambda,\nu)$ bond.

\emph{Case $(\lambda,\lambda)$}: Assume that two cells
share a $(\lambda,\lambda)$ bond and let $\gamma$ be the signed
incidence angle formed by the corresponding bond planes. As $Z$ and
$\ove Z$ cells do not have $\lambda$ bonds, see Figure
\ref{CCZZfigure}, the cells sharing the $(\lambda,\lambda)$ bond are
necessarily $C$ or $\ove C$. Let a $C$ and a $\ove C$ cell share
$(\lambda,\lambda)$ bond. By referring to  Figure \ref{CZbonds}(c) one
realizes that the incidence angle $\gamma$ at the shared bond cannot
be $0$, for this would imply that the external angles  are
$\pi - 2 \theta^*>\theta^*$. Due to symmetry, one finds exactly two
symmetric values of the incidence angle ensuring such external angles
to be  $\theta^*$. In particular, we have that $\gamma=\pm
\gamma^*/2$, where $\gamma^*$ is defined in  \eqref{gammastar}.
 The occurrence of a $(\lambda,\lambda)$
bond between a $C$ and a $\ove C$ cell will be however proved to be not
admissible in Step 7. 
If both cells
are of type $C$, the signed incidence angle $\gamma$ is either
$0$, see Figure \ref{CZbonds}(d),  or $-\gamma^*$, for these are the
only values ensuring that the external angles  are $\theta^*$. By symmetry, in case both cells
are of type $\ove C$, the signed incidence angle is either $0$ or $\gamma^*$.

\emph{Cases $(\mu,\mu)$ and $(\nu,\nu)$}: By varying the signed incidence angle $\gamma$, the
external angles remain equal and are strictly decreasing with respect to
$|\gamma|$. As such external angles are  $\pi - 2\theta^*>\theta^*$  for
$\gamma=0$, see Figure \ref{CZbonds}(e) for the case $(\mu,\mu)$, one finds exactly two symmetric values of $\gamma$ making
them equal to $\theta^*$. By referring to the discussion of the
$(\lambda,\lambda)$ bond, one can check that such values are
exactly $\pm \gamma^*/2$.  

\emph{Case $(\mu,\nu)$}: The external angles are $\theta^*$ for $\gamma=0$,
see Figure \ref{CZbonds}(f) and are antimonotone with respect to
$|\gamma|$. As such, $\gamma=0$ is the only admissible value for the
incidence angle.

\noindent {\bf Step 3: Reorienting $H$.} Given a ground state, we show in this step that one
can reorient the reference lattice $H$ in such a way that only the
cell types $\{Z,\ove Z, C, \ove C\}$ occur.

Assume that the ground state
contains a $C$ cell. By reorienting $H$ one can assume it to be 
of type $C$  or $\ove C$.  Letting such cell be indexed by $(s_0,t_0)$ we have that
cells $(s_0,t_0\pm 1)$ are necessarily either of  type $C$
or $\ove C$, for they all need to share a $(\lambda,\lambda)$ bond
with $(s_0,t_0)$. By iterating the argument we have that all cells
$(s_0,t)$, for $t\in \Zz$, are either of  type $C$
or $\ove C$.
 We now prove that the ground state contains no type
$C_\pm$ nor $\ove C_\pm$ cells. Assume indeed that  cell $(s_1,t_1)$
is of type  $C_-$  (analogously for  $C_+$  and $\ove C_\pm$). Then, the
same argument as above entails that all cells $(r,t_1)$, for $r\in \Zz$, are
either of type  $C_-$ or $\ove C_-$.  This, however, brings to a
contradiction as cell $(s_0,t_1)$ would have to be both of type $C_-$ or $\ove C_-$  and $C$ or $\ove C$.

Having fixed the orientation
of $H$, all $C$ cells are of type
$C$ or $\ove C$, so that we just refer to $C$ and $\ove C$ cells   in
the following, omitting the word {\it type}. Note that for each $C$
and $\ove C$ cell the $\lambda$
bonds are $\{y_1,y_2\}$ and $\{y_4,y_5\}$.
If the ground state contains just type $Z$ and $\ove Z$ cells, no
reorientation of $H$ is actually needed. In all cases, by considering
arrangements of
cells of a ground state we can always refer to the
orientations of Figure \ref{CCZZfigure}.

\noindent {\bf Step 4: Rippled structures, special case.}  
Let  us  start by considering the special case of 
  two $C$ cells sharing a $(\lambda,\lambda)$ bond with $\gamma=0$. The goal is here
to show that two neighboring cells of such $C$ cells must be of the same type (not
necessarily $C$) and share a bond with $\gamma=0$. This fact will be
used in an induction argument in Step 5.

We can assume with no loss of generality that the joined $C$
cells are $(0,0)$ and $(0,1)$. 
We proceed by discussing cases.

 \emph{Case $ \tau(1,0)= C$}:  One has that the shared bond between
$(0,0)$ and $(1,0)$ is
of type $(\mu,\mu)$. We directly
check that 
 $\tau(1,1)\not \in\{ Z,\, \ove Z\}$ because in this case the cells
 $(1,0)$ and $(1,1)$ would share a $(\lambda,\mu)$ or a
 $(\lambda,\nu)$ bond, which is not admissible, see Table
 \ref{angletable}. The case $\tau(1,1)=\ove C$ is also excluded: The
 $(\mu,\nu)$ bond shared by cells $(0,1)$ and $(1,1)$ requires the
 corresponding signed incidence angle to be $0$, see Table
 \ref{angletable}, and the two cells $(1,0)$ and $(1,1)$ would have
 no shared bond. Indeed, by referring to the notation of Figure
 \ref{noshared}(a), one has that  the three bonds between the cells  $(0,0)$, $(1,0)$, the cells $(1,0)$, $(0,1)$, and the cells  $(0,1)$, $(1,1)$ are coplanar,  the atoms in the darkened regions belong
 to two parallel planes, whereas atoms $\{y_1,y_2,y_4,y_5\}$ of
 cell $(1,0)$ are not
 coplanar with those of cell $(1,1)$. As such, the marked bond cannot
 be shared by cells $(1,0)$ and $(1,1)$.
\begin{figure}[h]
  \centering
  \pgfdeclareimage[width=130mm]{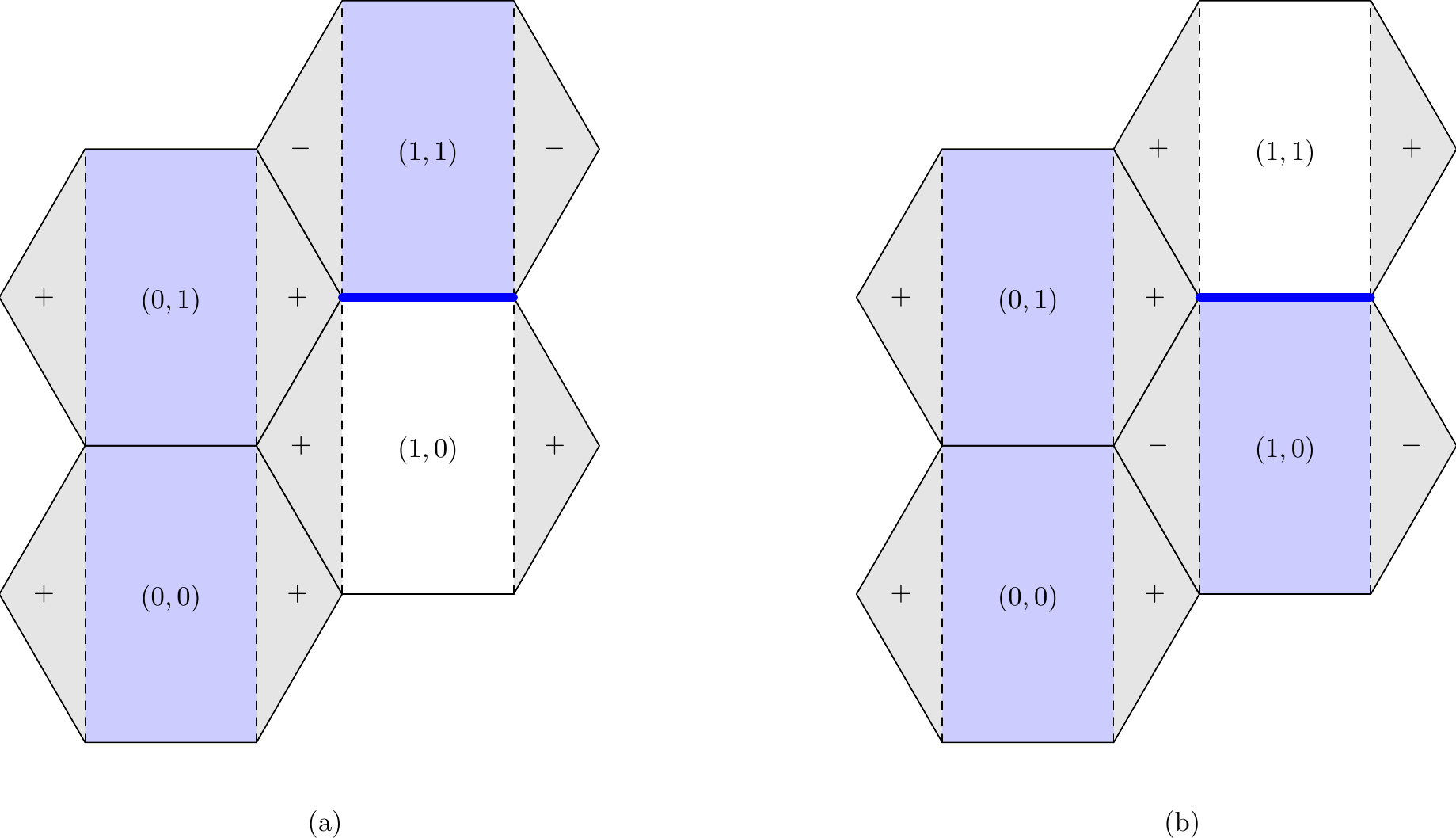}{noshared} 
\pgfuseimage{noshared}
\caption{Two nonrealizable configurations.}\label{noshared}
\end{figure}
The only possibility left is $ \tau(1,1)=C$, which can indeed be realized by
letting the signed incidence angle along the shared bond between cells $(1,0)$ and $(1,1)$ be $0$.

 \emph{Case $ \tau(1,0) = \ove  C$}:  One can again check that
$\tau(1,1)\not \in\{ Z,\, \ove Z\}$ because in this case the cells
 $(1,0)$ and $(1,1)$ would share a $(\lambda,\mu)$ or a
 $(\lambda,\nu)$ bond, which is not admissible. Moreover, the case $\tau(1,1)=C$ can be
excluded arguing similarly as above: By referring to Figure
\ref{noshared}(b) one has that  the three bonds between the cells  $(0,0)$, $(1,0)$, the cells $(1,0)$, $(0,1)$, and the cells  $(0,1)$, $(1,1)$ are coplanar,  the atoms in the darkened regions belong
 to two parallel planes, whereas atoms $\{y_1,y_2,y_4,y_5\}$ of
 cell $(1,0)$ are not
 coplanar with those of cell $(1,1)$. In particular,  the marked bond cannot
 be shared by cells $(1,0)$ and $(1,1)$. We are left with the possibility
 of having $\tau(1,1)=\ove C$, which can indeed be realized   by
letting the signed incidence angle along the shared bond between cells
$(1,0)$ and $(1,1)$ be $0$.

 \emph{Case $\tau(1,0)=Z$}:  One can argue exactly as in the case of $\tau(1,0)=\ove
C$ and find that $\tau(1,1)=Z$ as well, with a signed incidence angle
 along the shared bond between $(1,0)$ and $(1,1)$ being $0$. 
 Indeed, one can still refer to Figure \ref{noshared}(b) by forgetting the two
right-most atoms. 
 
 \emph{Case $\tau(1,0)=\ove Z$}:  One can argue exactly as in the case of $\tau(1,0)=C$  and find that 
$\tau(1,1)=\ove Z$ as well, with a signed incidence angle  along the shared bond
between $(1,0)$ and $(1,1)$ being $0$.  This case corresponds to  Figure \ref{noshared}(a) upon forgetting the two
right-most atoms.

In conclusion, within this step we have proved the following
\begin{align}
  &\tau(0,0)=\tau(0,1)=C \ \ \text{and} \ \ \gamma=0 \ \ \text{along the
  shared $(\lambda,\lambda)$ bond} \nonumber\\
&\Longrightarrow \quad 
\tau(1,0)=\tau(1,1) \ \ \text{and} \ \ \gamma=0 \ \ \text{along the
  shared bond}.\label{step3}
\end{align}

\noindent {\bf Step 5: Rippled structures, general case.} The
argument of Step 4 is purely based on bond types. As such, it can be
verbatim extended  to case $\tau(0,0)=\tau(0,1)\in\{Z,\ove
Z,\ove C\}$ as long as $ \gamma=0 $ along their
  shared bond. 
In addition, conclusion \eqref{step3}
  can be extended by symmetry to cell  $(1,-1)$ and cells $(-1,0)$, $(-1,1)$, and $(-1,2)$  as
  well. We hence have the following
\begin{align}
  &\tau(0,0)=\tau(0,1) \ \ \text{and} \ \ \gamma=0 \ \ \text{along the
  shared bond}\nonumber\\
& \Longrightarrow \quad
 \tau(-1,0) = \tau(-1,1) = \tau(-1,2) \ \  \   \text{and}  \ \ \  \tau(1,-1) = \tau(1,0) =
\tau(1,1)  \ \ \nonumber\\
&\text{and} \ \ \gamma=0 \ \ \text{along the
  shared bonds}.\label{step}
\end{align}

We can now use \eqref{step} iteratively and prove that if $\tau(0,0) = 
\tau(0,1)$ with  $\gamma=0$ along the shared bond. 
then
$\tau(s,\cdot)$ is constant for all $s \in \Zz$ and $\haz \gamma (s,t)  =0$ for all
$(s,t)\in \Zz^2$. Note that all cell types  $\{Z,\ove Z,C,\ove C\}$
 are admissible for
$\tau(s,\cdot)$.

This  proves the
Theorem in case $\tau(0,0) = 
\tau(0,1)$ with  $\gamma=0$ along the shared bond. 

\noindent {\bf Step 6: Zigzag roll-ups.} 
Let us now consider the case of  two $C$ cells sharing a
$(\lambda,\lambda)$ bond with $\gamma=-\gamma^*$. The goal is here
to show that  $\tau\equiv C$ and $\haz \gamma\equiv  - \gamma^*  $. 

  As in Step 4, assume with no
loss of generality that the $C$ cells are $(0,0)$ and $(0,1)$, namely
$\tau(0,0)=\tau(0,1)=C$. We aim at proving that
$\tau(1,0)=\tau(1,1)=C$ as well, 
which would imply that the signed incidence angle of the shared  $
(\lambda,\lambda)$  bond between $(1,0)$ and $(1,1)$ is again $
-\gamma^*$.

 \emph{Case $\tau(1,0) \in \lbrace \ove C, Z \rbrace$ (not admissible)}:   If this was the case,  cell $(1,0)$ would share a $(\mu,\nu)$ bond with
cell $(0,0)$. According to Table \ref{angletable}, the two
corresponding signed incidence angles  for $\ove C$ and $Z$, respectively,  would be $0$. This in
particular entails that the atoms $\{y_2,y_3,y_4\}$ of cell $(0,0)$
and $\{y_1,y_5,y_6\}$ of cell  $(1,0)$  have to be coplanar. At the
same time, cell $(1,0)$ would share a $(\mu,\nu)$ bond with cell
$(0,1)$ and the  atoms  $\{y_2,y_3,y_4\}$ of cell $(0,1)$
and $\{y_1,y_5,y_6\}$ of cell  $(1,0)$  would have to be
coplanar.  This
is however   impossible as the atoms $\{y_2,y_3,y_4\}$ in the two
cells  $(0,0)$ and $(0,1)$ are not coplanar, due to the condition
$\gamma = - \gamma^*$ along the shared bond between cells $(0,0)$ and $(0,1)$. 

 \emph{Case  $\tau(1,0)=\ove Z$ (not admissible)}:  Assume
that this  was  the case and consider cell $(1,1)$. This cannot be of type $C$ nor
$\ove C$, for in this case cell $(1,1)$  would share a $(\lambda,\mu)$  
bond (not admissible by Table \ref{angletable})   with cell  $(1,0)$.  On the other hand, cell $(1,1)$ cannot be of type $\ove
Z$ as in this case it would share a $(\mu,\nu)$ bond with cell $(1,0)$
and the corresponding  signed incidence angle 
$0$. We could then apply Step 5 in order to find that the signed incidence angle between cell
$(0,0)$ and $(0,1)$ would have to be
$0$ as well, which is a contradiction. 
The last possibility is that cell $(1,1)$ is of type $Z$. In this case,  cell
$(1,1)$ and cell $(0,1)$ share a $(\mu,\nu)$ bond and thus the
signed incidence angle along the shared bond  is $0$, see Table \ref{angletable}.  Similarly to the case of Figure \ref{noshared}(a),
cell $(1,1)$ would not share a bond with cell $(1,0)$.

 \textit{Case  $\tau(1,0)=C$}:   We have hence proved that, given  $\tau(0,0)=\tau(0,1)=C$  with
signed incidence angle   $ -\gamma^*$ along the shared  $(\lambda,\lambda)$
bond, the only possible type of cell $(1,0)$ is $C$. 
Cells $(1,1)$ and $(1,-1)$ need
then to be of type $C$ or $\ove C$ as well, for they have to share a
$\lambda$ bond with cell  $(1,0)$.  One can however exclude that they
are of type $\ove C$ since in this case the signed incidence angle to
cell $(0,1)$ or cell $(0,0)$,  respectively,   would be $0$ and they would
not share a bond with cell  $(1,0)$.    We again refer to Figure \ref{noshared}(a) for a similar argument. 

In conclusion, if   $\tau(0,0)=\tau(0,1)=C$  with
signed incidence angle   $ -\gamma^*$ along the shared 
bond, one has that  $\tau(1,1)=\tau(1,0)=\tau(1,-1)=C$.  This can indeed be realized by
letting the signed incidence angle along the shared bond between cells $(1,1)$, $(1,0)$ and $(1,0)$, $(1,-1)$  be $- \gamma^*$.  By symmetry, the same holds
for cells  $(-1,2)$, $(-1,1)$, and $(-1,0)$  as well. It is now easy to proceed by induction in order to prove that indeed
 $\tau(s,t)=C$  and $\haz \gamma (s,t)=-\gamma^*$ for all
 $(s,t)\in \Zz^2$.  

An analogous conclusion obviously holds in case
 $\tau(0,0)=\tau(0,1)= \ove C$ with signed incidence angle  $
\gamma^*$. In this case, $\tau(s,t)=\ove C$ and $\haz \gamma (s,t)=\gamma^*$ for all
 $(s,t)\in \Zz^2$.  
This  proves the
Theorem in case $\tau(0,0) = 
\tau(0,1) \in \{C,\, \ove C\}$ with  $\gamma=\mp \gamma^*$ along the shared bond.

\noindent {\bf Step 7:  Nonadmissible  configurations
  containing $C$ and $\ove C$ cells.}
In order to conclude the proof of the Theorem, one needs to check
that no other configurations of optimal cells  are  admissible  but those
already considered in Steps 5 and 6. This is done here and in Step 8.

If a ground state contains a $C$ or a $\ove C$ cell, it contains
infinitely many as these are the only ones that can share $\lambda$
bonds. Assume that cell $(0,0)$ is of type $C$. 
Then, all cells
$(0,t)$ are either $C$ or $\ove C$, see Step 3.  If two adjacent cells $(0,t)$ are
of the same type, one has that $t \mapsto \tau(0,t)$ is constant, due to Step
4. One is then left with the possibility that $\tau(0,t)=C$ for $t$ even and
$\tau(0,t)=\ove C$ for $t$ odd. The rest of the step is aimed at
proving that such an  alternation  of types is not  admissible.  

Let us start by checking that a configuration with
\begin{equation}
\tau(s,t) =
    C \ \ \text{if $s+t$ is even},\quad \tau(s,t) = 
\ove C  \ \ \text{if $s+t$ is odd,  for $s=0,1$,} \label{even}
\end{equation} 
is not admissible.  Indeed, in this case
the four coplanar atoms of cell $(0,0)$ and those of cell $(1,0)$
belong to parallel planes and atoms $\{y_2,y_3,y_4\}$ of cell $(0,0)$
and $\{y_1,y_5,y_6\}$ of cell $(1,0)$ are coplanar, see the darkened
region in  Figure \ref{noshared2}. At the same time, the four coplanar atoms of cell $(0,1)$ and those of cell $(1,1)$
belong to parallel planes and atoms $\{y_2,y_3,y_4\}$ of cell $(0,1)$
and $\{y_1,y_5,y_6\}$ of cell $(1,1)$ are coplanar. This, however, excludes
  that cells  
$(0,0)$, $(1,0)$ and cells $(0,1)$, $(1,1)$ simultaneously share the
three marked bonds in Figure \ref{noshared2} and configuration
\eqref{even} is not  admissible.  By symmetry, the configuration
\begin{equation}
\tau(s,t) =
    C \ \ \text{if  $t$  is even},\quad \tau(s,t) = 
\ove C  \ \ \text{if   $t$   is odd,  for $s=0,1$,} \label{odd}
\end{equation}
is not   admissible  as well.

\begin{figure}[h]
  \centering
  \pgfdeclareimage[width=57mm]{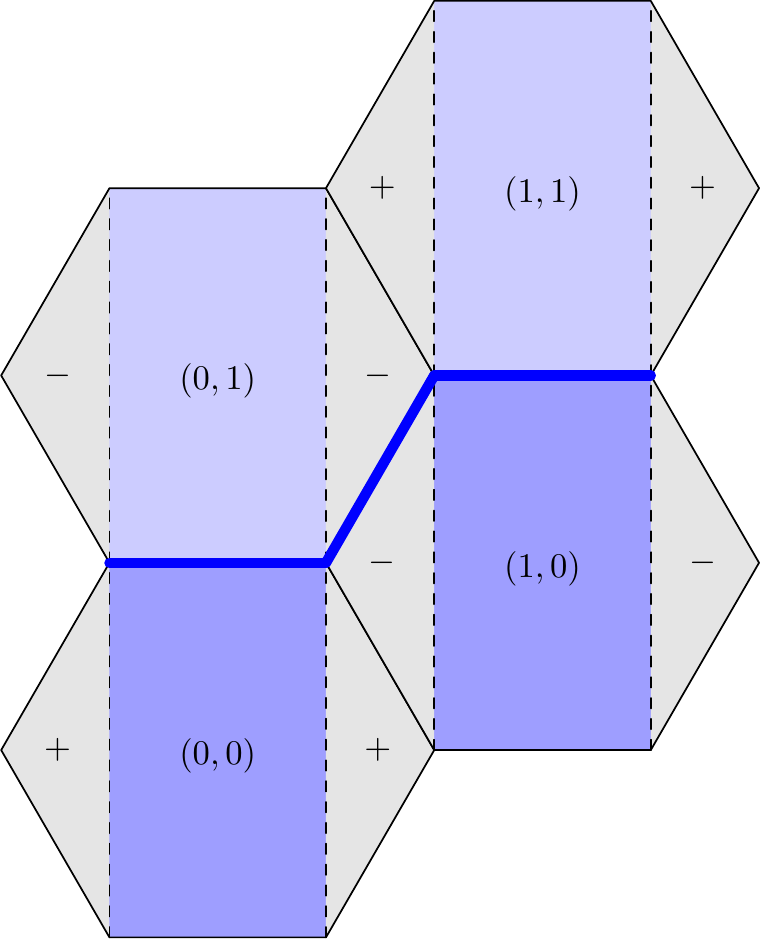}{noshared2} 
\pgfuseimage{noshared2}
\caption{Configuration \eqref{even} is
  not  admissible.}\label{noshared2}
\end{figure}

Assume now that $\tau(0,t) = C$ for $t$ even and $\tau (0,t)=\ove C$ for $t$
odd and $\tau(1,t) \in \{Z, \ove Z\}$. We can assume that neighboring cells $(1,t)$
are of different types  since   otherwise one would have a signed
incidence angle $0$ along a shared bond  (see Figure \ref{CCZZfigure} and Table \ref{angletable})  and we would be in the
situation of Step 5,  see \eqref{step}.  If $\tau(1,0)=Z$, we can argue exactly in the
case of \eqref{even} (by forgetting the two
right-most atoms in Figure \ref{noshared2}) and find that the
configuration is not  admissible.  Analogously, the case
$ \tau(1,0)  =\ove Z$ can be excluded by arguing as for
\eqref{odd}.

\noindent {\bf Step 8: Conclusion of the proof.} Let us now check that the previous steps exhaust all possible cases and that the statement holds.

If the ground state contains a $C$ cell (analogously, a $\ove C$
cell), then we are in the situations of Steps 5 or 6 as all other
possibilities are excluded by Step 7 and Table \ref{angletable}.   In case the ground state contains just
$Z$ or $\ove Z$, two cells of the same type have to share a bond and this has to be of
type $(\mu,\nu)$  (recall the orientations from Figure \ref{CCZZfigure}). The corresponding incidence angle is $0$ and,  after possible reorientation of $H$,  we are
in the situation of \eqref{step} (Step 5).
\end{proof}

\section*{Acknowledgement}
The support by the Austrian Science Fund (FWF) projects F\,65, P\,27052, and I\,2375 and the Alexander von Humboldt Foundation is gratefully acknowledged. This work has been funded by the Vienna Science and Technology Fund (WWTF)
through Project MA14-009. The authors acknowledge the kind hospitality of the Mathematisches Forschungsinstitut Oberwolfach, where part of this research was performed.

\bibliographystyle{alpha}

\begin{thebibliography}{99}

 \bibitem{Brenner90}
 D. W. Brenner. Empirical potential for hydrocarbons for use in
 stimulating the chemical vapor deposition of diamond films, {\it
   Phys. Rev. B}, 42 (1990), 9458--9471.


\bibitem{Clayden12}
J. Clayden, N. Greeves, S. G. Warren.
{\it Organic chemistry}, Oxford University Press, 2012.

\bibitem{Davini}
C. Davini, A. Favata, R. Paroni. The Gaussian stiffness of graphene
deduced from a continuum model based on molecular dynamics potentials,
{\it J. Mech. Phys. Solids},  104 (2017), 96--114.


\bibitem{Davoli15}
E. Davoli, P. Piovano, U. Stefanelli. Wulff shape emergence in
graphene,  {\it Math. Models Methods Appl. Sci.} 26 (2016),
2277--2310. 

\bibitem{Deng}
S. Deng, V. Berry. Wrinkled, rippled and crumpled graphene: an
overview of formation mechanism, electronic properties, and
applications, {\it Mater. Today}, 19 (4)(2016), 197--212.

 \bibitem{E-Li09}
 W. E, D. Li. On the crystallization of 2{D} hexagonal lattices,  {\it
   Comm. Math. Phys.} 286 (2009), 3:1099--1140.

\bibitem{Smereka15}
B. Farmer, S. Esedo\={g}lu, P. Smereka. Crystallization for a
Brenner-like potential,  {\it Comm. Math. Phys.} 349 (2017),  1029--1061. 
  

\bibitem{Herrero}
  C. P. Herrero, R. Ramirez. Quantum effects in graphene monolayers:
  Path-integral simulations. {\it J. Chem. Phys.} 145 (2016), 224701.



\bibitem{Fasolino}
A. Fasolino, J. H. Los, M. I. Katsnelson.
Intrinsic ripples in graphene, {\it 
Nature Materials}, 6 (2007), 858--861.

\bibitem{Ferrari}
A. C. Ferrari et al. Science and technology roadmap for graphene,
related two-dimensional crystals, and hybrid systems, {\it Nanoscale},
7 (2015), 4587--5062.



\bibitem{tube}
 M. Friedrich,
E. Mainini,
P. Piovano,
U. Stefanelli.
Characterization of optimal carbon nanotubes under stretching and validation of the Cauchy-Born rule.
Submitted, 2017. Preprint at  {\tt  arXiv:1706.01494}.


\bibitem{Friedrich16}
M. Friedrich, P. Piovano, U. Stefanelli. The geometry of $C_{60}$,  
{\it SIAM J. Appl. Math.} 76 (2016), 2009--2029.  

\bibitem{Friedrich17}
M. Friedrich, U. Stefanelli.  Periodic ripples in graphene: a
variational approach. Submitted, 2018. Preprint at  {\tt  arXiv:1802.05053}.


  

\bibitem{Lambin}
P. Lambin. Elastic properties and stability of physisorbed graphene,
{\it 
Appl. Sci.} 4 (2014), 282--304.

\bibitem{Landau2}
L. D. Landau, E. M.  Lifshitz. {\it Statistical Physics}, Pergamon,
Oxford, 1980.


\bibitem{Lazzaroni17}
G. Lazzaroni, U. Stefanelli.
Chain-like minimizers in three dimensions.  Submitted, 2017. 
Preprint available at {\tt http://cvgmt.sns.it/paper/3418/} 



 \bibitem{numeric-stability}
 E. Mainini, H. Murakawa, P. Piovano, U. Stefanelli.
 Carbon-nanotube geometries: analytical and numerical results, 
 {\it Discrete Contin. Dyn. Syst. Ser. S}, 10 (2017), 141--160. 

 \bibitem{stability}
 E. Mainini, H. Murakawa, P. Piovano, U. Stefanelli. 
 Carbon-nanotube geometries as optimal configurations.  {\it
   Multiscale Model. Simul.} 15 (2017),  1448--1471.

\bibitem{Mainini-Stefanelli12}
 E. Mainini, U. Stefanelli.
 Crystallization in carbon nanostructures, {\it Comm. Math. Phys.} 328 (2014), 2:545--571.


\bibitem{Mermin}
N. D. Mermin. Crystalline order in two dimensions, {\it Phys. Rev.} 176 (1968), 250--254.

\bibitem{Mermin2}
N. D. Mermin, H. Wagner. Absence of ferromagnetism or antiferromagnetism in one- or two-dimensional isotropic Heisenberg models, {\it Phys. Rev. Lett.} 17 (1966), 1133--1136. 



\bibitem{Meyer}
J. C. Meyer, A. K. Geim, M. I. Katsnelson, K. S. Novoselov, T. J. Booth, S. Roth. The structure of suspended graphene sheets, {\it Nature} 446 (2007), 60--63.


\bibitem{stable}
U. Stefanelli. Stable carbon configurations, {\it Boll. Unione
  Mat. Ital (9)},  10 (2017), 335--354.


\bibitem{Stillinger}
 F. H. Stillinger, T. A. Weber.
 Computer simulation of local order in condensed phases of silicon,
 {\it Phys. Rev. B}, 8 (1985), 5262--5271.

\bibitem{Tersoff}
 J. Tersoff. New empirical approach for the structure and energy of
 covalent systems,  {\it Phys. Rev. B}, 37 (1988), 6991--7000.


\bibitem{Theil06}
F. Theil. A proof of crystallization in two dimensions, {\it
Comm. Math. Phys.} 262 (2006), 1:209--236.

\end{thebibliography}

\end{document}